\newcommand{\Rmnum}[1]{\expandafter\@slowromancap\romannumeral #1@}
\newtheorem{theorem}{Theorem}
\DeclareFontFamily{OT1}{pzc}{}
\DeclareFontShape{OT1}{pzc}{m}{it}{<-> s * [1.10] pzcmi7t}{}
\DeclareMathAlphabet{\mathpzc}{OT1}{pzc}{m}{it}
\newcommand{\MyMapTemplatePrefix}[4]{\expandafter#1\csname#3#4\endcsname{#2{#4}}}
\newcommand{\MyMapTemplatePrefixNew}[5]{\expandafter#1\csname#4#5\endcsname{#2{#3{#5}}}}
\def\glb{\text{glb}}    \def\next{\text{next}}
\newtheorem{Remark}{Remark}
\begin{document}

\title{TVDO: Tchebycheff Value-Decomposition Optimization for Multiagent Reinforcement Learning}

\author{Xiaoliang Hu, Pengcheng Guo, Yadong Li, Guangyu Li, Zhen Cui, Jian Yang
\thanks{
Manuscript received 12 August 2023; revised 23 March 2024 and 15 July 2024; accept 3 September 2024.  
This work was supported by the National Natural Science Foundation of China (Grants Nos. 62476133, 62006119), the fundamental research funds for the central universities under Grant 30919011232, the Natural Science Foundation of Shandong Province (Grant No. ZR2022LZH003). \textit{(Corresponding author: Zhen Cui.)}

The authors are with the School of Computer Science and Engineering, Nanjing University of Science and Technology, Nanjing 210094, China 
(e-mail: \href{peter_hu_xl@njust.edu.cn}{peter\_hu\_xl@njust.edu.cn}, \href{glimmer007@njust.edu.cn}{glimmer007@njust.edu.cn}, \href{liyadong@uzz.edu.cn}{liyadong@uzz.edu.cn}, \href{guangyu.li2017@njust.edu.cn}{guangyu.li2017@njust.edu.cn}, \href{zhen.cui@njust.edu.cn}{zhen.cui@njust.edu.cn}, \href{csjyang@njust.edu.cn}{csjyang@njust.edu.cn})

Digital Object Identifier 10.1109/TNNLS.2024.3455422
}
}

\markboth{IEEE Transactions on Neural Networks and Learning Systems}%
{Shell \MakeLowercase{\textit{et al.}}: A Sample Article Using IEEEtran.cls for IEEE Journals}


\maketitle

\begin{abstract}
In cooperative multiagent reinforcement learning (MARL), centralized training with decentralized execution (CTDE) has recently attracted more attention due to the physical demand. However, the most dilemma therein is the inconsistency between jointly-trained policies and individually-executed actions. In this article, we propose a factorized Tchebycheff value-decomposition optimization (TVDO) method to overcome the trouble of inconsistency. In particular, a nonlinear Tchebycheff aggregation function is formulated to realize the global optimum by tightly constraining the upper bound of individual action-value bias, which is inspired by the Tchebycheff method of multi-objective optimization. 
We theoretically prove that, under no extra limitations, the factorized value decomposition with Tchebycheff aggregation satisfies the sufficiency and necessity of Individual-Global-Max (IGM), which guarantees the consistency between the global and individual optimal action-value function. Empirically, in the climb and penalty game, we verify that TVDO precisely expresses the global-to-individual value decomposition with a guarantee of policy consistency. Meanwhile, we evaluate TVDO in the SMAC benchmark, and extensive experiments demonstrate that TVDO achieves a significant performance superiority over some SOTA MARL baselines.
\end{abstract}

\begin{IEEEkeywords}
Tchebycheff method, Value Decomposition, Reinforcement Learning, Multiagent Cooperative Learning.
\end{IEEEkeywords}

\section{Introduction}
\label{submission}
\IEEEPARstart{W}{ith} the development of deep learning and multiagent system, multiagent reinforcement learning (MARL) has attracted much attention in recent years~\cite{Jie2024Time, Du2023Multiagent, Song2024Local, wu2020multi, hernandez2019survey}, and widely-used to solve a variety of cooperative and competitive tasks, such as video games~\cite{kurach2020google, wurman2022outracing}, robot swarms~\cite{He2022msac}, DOTA2~\cite{2019dotaberner},  transportation~\cite{dong2020event} and autonomous driving~\cite{zhou2020smarts}. In many realistic multiagent settings, an individual agent often only captures local/partial observation or communication due to the limitation of the environment. Hence, the learning of decentralized policies is required on the condition of individual action observation of each agent. The advantage of decentralized learning is two-fold: i) reduces the computation burden of exponential growth in global state and joint action space; ii) facilitates using conventional reinforcement learning methods. In particular, a representative work is independent Q-learning (IQL)~\cite{2017multiagent}, where each agent is trained individually to learn their own policy. However, the decentralized policy may not converge to the global optimum because of the inherent non-stationarity of partial observability~\cite{lowe2017multi}. 

To address the above problem, the popular solution is to take the paradigm of centralized training with decentralized execution (CTDE)~\cite{oliehoek2016concise}, in which decentralized policies could be learned in a simulated centralized setting.
In recent times, a multitude of MARL methods has been introduced for the paradigm of CTDE. In particular, one main stream of works (e.g., VDN~\cite{sunehag2017value}, QMIX~\cite{rashid2018qmix}, QTRAN~\cite{son2019qtran}, Qatten~\cite{yang2020qatten}, QPLEX~\cite{wang2021qplex}, etc.) focused on conceiving new algorithms for value decomposition. The value-based branch that we follow aims to decompose the global action-value function into individual action-value functions to ensure consistency between the global and individual policies.   

We have witnessed much success in value-decomposition MARL~\cite{sunehag2017value,rashid2018qmix,son2019qtran,rashid2020weighted,yang2020qatten,wang2021qplex,sun2021dfac,zohar2022locality, shen2022resq, Wei2024VGN}. As the representative works, VDN~\cite{sunehag2017value} and QMIX~\cite{rashid2018qmix} learn a linear value decomposition by using the additivity and the monotonicity respectively. In particular, VDN may be understood as a special case of QMIX by treating each agent equally in the accumulative action-value function. However, they are implicitly built on the condition of the structure constraint of monotonicity, and only satisfy the sufficient condition of Individual-Global-Max (IGM)~\cite{son2019qtran}. To release the restrictive constraint, QTRAN~\cite{son2019qtran} converts the original global action-value function into a newly factorized function that maintains the optimal global policy. Three complicated networks are designed to fulfill three parts of the factorized action-value function. Although QTRAN well presents a sufficient and necessary condition of IGM, it requires an extra limitation, i.e., affine transformation. More recently, Weighted QMIX~\cite{rashid2020weighted} uses the weighted projection to solve the suboptimal policy problem existed in QMIX, but still requires the monotonicity constraint. QPLEX~\cite{wang2021qplex} transforms equivalently the condition of IGM to the advantage-based IGM condition. However, the method requires the positive importance weights in the duplex dueling component to guarantee the IGM condition. Even with these great progresses in value decomposition, almost the above methods cannot guarantee the complete expressiveness of IGM condition except for QTRAN and QPLEX. Although QTRAN and QPLEX provide theoretical guarantee for factorizing cooperative MARL tasks, they require an extra constraint (e.g. affine transformation (QTRAN) or the positive importance weights (QPLEX)). Until now, solving an equivalent and accessible decomposition to precisely express IGM remains challenging and inevitable to fulfill the consistency between centralized policies and individual actions in MARL.

In this paper, we introduce a novel MARL approach called factorized Tchebycheff Value Decomposition Optimization (TVDO) in the fashion of CTDE, which is inspired by the Tchebycheff approach of multi-objective optimization~\cite{Tchebycheff}. Concretely, we introduce a novel nonlinear aggregation function, based on the Tchebycheff approach, to achieve the global optimum through tightly constraining the upper bound of individual action-value bias. Theoretically, we prove that the proposed TVDO method satisfies the sufficiency and necessity of the IGM condition. Thus, the decomposition way in TVDO represents precisely from global to individual value factorization with a guarantee of policy consistency. Empirically, in the climb and penalty game~\cite{claus1998, panait2006}, we verify that TVDO indeed represents precisely the value factorization, while most existing value-based decomposition MARL methods can not. Although the value-based decomposition MARL approach QTRAN achieves better performance on the learning optimality and stability, it needs the extra condition of an affine transformation from the joint Q-value to individual Q-values, which may result in unsatisfactory performance in some complex scenarios such as StarCraft II. We also evaluate TVDO on a series of decentralized micromanagement scenarios in the SMAC ~\cite{2019starcraft} benchmark. The results show that the performance of the TVDO algorithm is superior to the existing methods in terms of convergence speed. 

In summary, our contribution is threefold: i) proposes the Tchebycheff Value Decomposition method, which is easy to use and effective for MARL; ii) theoretically proves that TVDO satisfies the sufficiency and necessity of the IGM condition with no extra constraint; iii) reports the SOTA performance in the SMAC benchmark.

\section{Related Work}
\label{releated work}
With the development of deep reinforcement learning, a large number of MARL methods have been proposed to solve cooperative multiagent tasks. In this paper, we will discuss value-based, policy-based, and distributed-based reinforcement learning methods in the cooperative multiagent environment.

\subsection{Value-based methods for MARL} 
Several value-based reinforcement learning methods \cite{sunehag2017value, rashid2018qmix, son2019qtran, yang2020qatten, rashid2020weighted, wang2021qplex, sun2021dfac, zohar2022locality, shen2022resq, Wei2024VGN} have been proposed to decomposes the global action-value function into individual action-value functions to ensure consistency between the global and individual policies, i.e., IGM~\cite{son2019qtran}. However, the earlier representative methods~\cite{sunehag2017value, rashid2018qmix, rashid2020weighted, yang2020qatten} can not satisfy the sufficiency and necessity IGM due to the structural constraints, i.e., additivity and monotonicity, or relaxations. Accordingly, QTRAN~\cite{son2019qtran} converts the original global action-value function into a newly factorized function that maintains the optimal global policy. In addition, QPLEX~\cite{wang2021qplex} extended further the action-value function conditions on advantage-based IGM to keep the consistency between global and individual optimal actions. Although QTRAN~\cite{son2019qtran} and QPLEX~\cite{wang2021qplex} provide theoretical guarantees for factorizing cooperative MARL tasks, they require an extra constraint (e.g. affine transformation (QTRAN) or the positive importance weights (QPLEX)). Furthermore, ResQ~\cite{shen2022resq} finds the optimal joint policy for any state-action value function through residual functions and satisfies the IGM condition. More recently, VGN~\cite{Wei2024VGN} proposes a novel value decomposition method to model the relationship between the joint action-value function and the individual action-value functions. Overall, almost all approaches (except for QTRAN, QPLEX, and VGN) still suffer from the simplicity of decomposition and relaxation of these constraints, which may result in poor performance in some complex tasks such as in the SMAC benchmark.

\subsection{Policy-based methods for MARL}
In order to deal with continuous action space, some policy-based methods \cite{lowe2017multi, foerster2018counterfactual, iqbal2019actorattentioncritic, de2020deep, wang2020off, kim2021policy, zhang2021fop} have been proposed in recent years. MADDPG \cite{lowe2017multi} and COMA \cite{foerster2018counterfactual} are the variant of the actor-critic method, which learn a centralized critic instead of an individual critic on the condition of joint action-observation trajectory. Particularly, compared with MADDPG, COMA only learns an actor network by sharing parameters to speed learning. \cite{de2020deep} proposed a novel method called FacMADDPG, which facilitates the critic in decentralized POMDPs based on MADDPG and QMIX. DOP \cite{wang2020off} introduced firstly the value decomposition similar to Qatten \cite{yang2020qatten} into the multiagent actor-critic framework with on-policy TD($\lambda$) and tree backup technique. \cite{kim2021policy} posed a meta-learning multiagent policy gradient theorem to adapt quickly to the non-stationarity of the environment, and gave the theoretical analysis in detail. Furthermore, FOP \cite{zhang2021fop} achieved global optimum by transforming the IGM condition into the Individual-Global-Optimal (IGO) condition. However, due to centralized-decentralized mismatch (CDM) problems, the above methods perform unsatisfactorily compared with value-based methods. 

\subsection{Distributed-based methods for MARL}
Some distributed-based approaches\cite{zhang2018fully, suttle2020multi, sun2021dfac, dai2022distributed, Hu2024Distributed} have been proposed by combining the distributed optimization technique with multiagent learning. Zhang \textit{et al.}~\cite{zhang2018fully} proposed two fully decentralized actor-critic algorithms to maximize the globally averaged return over the network and provide provable convergence guarantees. However, this work is an on-policy algorithm, which implies that each agent learns solely based on the policy it is currently executing. Suttle \textit{et al.}~\cite{suttle2020multi} presented a distributed off-policy actor-critic method for MARL. Despite the recent advances in the field, these methods require the communication graphs to be undirected and the weight matrices to be doubly stochastic. Hence, Dai \textit{et al.}~\cite{dai2022distributed} proposed two distributed actor-critic algorithms for MARL over the directed graph with fixed topology that only require the weight matrix to be row or column stochastic. Furthermore, DFAC \cite{sun2021dfac} introduced a Distributional Value Function Factorization (DFAC) framework by integrating distributional reinforcement learning and exiting value-based decomposition MARL algorithms, e.g. IQL, VDN, and QMIX. However, its DFAC variants (DDN, DMIX) still only satisfy the sufficient condition of IGM.

\section{Problem Description}
The cooperative MARL task can be regarded as a decentralized partially observable Markov decision process (Dec-POMDP)~\cite{oliehoek2016concise}. Formally, the task can be formulated as a tuple ${\langle \mcA, \mcS, \mcO, \mcU, \mcT, P, R, \gamma\rangle}$, where each element is defined as follows:
\begin{itemize}
\item[-] ${\mcA}$: a team of $N$ agents $\{a_1, a_2, \cdots, a_N \}$, in which ${i \in N}$ is the index of the $i$-th agent.
\item[-] ${\mcS}$: a finite set of environmental states. Given a current state ${s \in \mcS}$, we often denote the next state as ${s' \in \mcS}$.
\item[-] ${\mcO}$: a set of observation information of agents. The joint observation at time ${t}$ is denoted as ${o^t} = (o^t_1, o^t_2, \cdots, o^t_N)$, which contains local observation of each agent.
\item[-] ${\mcU}$: a set of joint actions. Let ${u^t} =(u^t_1, u^t_2, \cdots, u^t_N) \in \mcU$ denotes the entire group of agent actions at time ${t}$. Note that the actions may be discrete or continuous.
\item[-] ${\mcT}$: a set of joint action-observation historical trajectories. At time $t$, the action-observation history is ${\tau^{t} = \{o^1, u^1, o^2, u^2, \cdots, o^{t-1}, u^{t-1}, o^t \} \in \mcT}$, which excludes $u^t$ for the next estimation.
\item[-] ${P(s'|s,u)}$: the state transition function, which usually denotes the probability from the state $s$ to the next state $s'$ when taking the action $u$.
\item[-] $R(s,u)$: the reward function. After each agent performs individual action ${u_{i}(i = 1, 2, \cdots, N})$ under the environment state $s$, the team of agents will receive an immediate global reward ${r = R(s, u)}$ shared for all agents.
\item[-] ${\gamma} \in (0, 1]$: the discount factor used in the computation of accumulative return.
\end{itemize}

Further, we introduce two widely used value functions. One is the state value function ${V(s_t)}$ defined as:
${V(s_t) := \mbE_{s,u}[\sum_{k=t}^{\infty}{\gamma^{k-1}R(s_k,u_k)}|u_k \sim \pi(\cdot|s_k)]}$,
where $\pi$ is the policy function. The state value aggregates historical rewards with discount rates. The other is the action-value function ${Q(s_t, u_t)}$ that denotes the value of current state $s_t$ when the joint action $u_t$ is taken by agents.
Conventionally, the action-value function is defined as ${Q(s_t, u_t) := \mathbb{E}_{s_{t+1} \sim P(\cdot|s_t, u_t)}[R(s_t, u_t) + \gamma V(s_{t+1})]}$.
Due to the partial observability that each agent can not obtain fully environmental state information, the state ${s_t}$ is replaced by the action-observation history ${\tau_t}$ in the action-value function ${Q(s_t, u_t)}$. Hereby, the action-value function is approximated to be ${Q(\tau_t, u_t) = \mathbb{E}_{\tau, u} [\wtR_t|\tau_t, u_t]}$, where ${\wtR_t = \sum_{i=0}^\infty {\gamma^{i} r_{t+i}}}$ with $r_{t+i}=R(\tau_{t+i}, u_{t+i})$.

\textbf{Centralized Training with Decentralized Execution}: In realistic multiagent environments, each agent only captures partial observation because of the limitation of the environment. Therefore, the learning of decentralized policies is required on the condition of individual observation of each agent. Perhaps the most naive training method for MARL tasks is IQL~\cite{2017multiagent}, in which each agent is trained independently to learn their policy. However, due to the instability of partial observability, the decentralized policies may not converge to a globally optimal solution under the condition of finite exploration. The most commonly alternative solution is to employ the fashion of centralized training with decentralized execution (CTDE)~\cite{kraemer2016multi}, where each agent learns the policy by optimizing individual action-value function based on the individual observation of each agent and global state in the training phase, and the agent makes its decision with local observation at execution time. 

The bottleneck of CTDE is how to keep the consistency between global and individual policy in the centralized training process when maximizing the joint action-value function. This is the known condition, Individual-Global-Max (IGM)~\cite{son2019qtran}, which is defined as:
\vspace{-0.2cm}
\begin{equation}
    \arg\max\limits_{u}{Q_{\glb}(\tau, u)} = \left(\begin{array}{cc}
        \arg\max\limits_{u_1}{Q_{1}(\tau_1, u_1)} \\
        \vdots \\
        \arg\max\limits_{u_N}{Q_{N}(\tau_N, u_N)}
    \end{array}\right),
    \label{con:IGM}
\end{equation}
where the joint action is defined as $u=(u_1,u_2,\cdots,u_N)$, $Q_\glb(\tau, u)$ denotes the global action-value function, and $Q_i(\tau_i, u_i)$ is the individual action-value function of agent $a_i$.
According to the Eq.~(\ref{con:IGM}), we note that the solution of the greedy global action is tantamount to choosing greedily individual policy under the joint action-observation history $\tau$ for an arbitrary task. In other words, the solved objective is to ensure the consistency between global optimal action and individual optimal actions for CTDE.

\section{Method}
In this section, we propose a novel MARL method called TVDO inspired by the Tchebycheff approach of multi-objective optimization to guarantee the consistency between global and individual optimal actions. In particular, the key idea of our method is to introduce a nonlinear aggregation method for the global policy optimum by tightly constraining the upper bound of individual action-value bias. 

\subsection{Factorized Tchebycheff Value-Decomposition}
To perform CTDE, we take the factorized value decomposition technique line, where the global action-value function could be decomposed with the accumulation of individual action-value functions. As the canonical case in VDN~\cite{sunehag2017value}, $Q_\glb(\tau,u)=\sum_{i=1}^{N}{Q_i(\tau_i,u_i)}$, where $u=(u_1,u_2,\cdots, u_N)$. However, the value factorization technique still suffers structural constraint, namely additive decomposability, which may lead to unsatisfactory performance in some tasks. As discussed in MAVEN~\cite{mahajan2019maven} and QPLEX~\cite{wang2021qplex}, the structure implements sufficient but not necessary conditions for the IGM condition, which limits the representation expressiveness of joint action-value functions. In other words, their full consistency cannot be well guaranteed during learning because of the mismatching between global and local optimal solutions.
It means that there exists a bias between them, denoted as $E$. In the ideal case, $E=0$ indicates the full decomposition consistency, i.e., satisfying the IGM condition. To this end, we minimize the maximum discrepancy of action-value between the global action $u=(u_1,u_2,\cdots, u_N)$ for the team and the local optimal action $\overline{u}_i$ for each agent, formally,
\begin{equation}
    \min_u \left\{E(\tau,u) =  \max\limits_{1\leq i \leq N}{ \left| Q_i(\tau_i,u_i) - Q_i(\tau_i,\overline{u}_i) \right|  }\right\}.
    \label{equ:e_glb}
\end{equation}
Please note that $\overline{u}_i$ is the optimal individual action of agent ${a_i}$, and $Q_i(\tau_i,\overline{u}_i)$ denotes the optimal action-value of agent ${a_i}$ based on individual action-observation history ${\tau_i}$.
When $u_i\rightarrow\overline{u}_i$, there should be $\sum_{i=1}^{N}{Q_i(\tau_i,u_i)} \geq Q_\glb(\tau,u)$ for the accumulated value factorization. Hereby, we propose the final optimization objective as follows:
\vspace{-0.2cm}
\begin{equation}
  \min_{u} \left [ \sum_{i=1}^{N}{Q_i(\tau_i,u_i)} - Q_\glb(\tau,u) + \rho E(\tau,u) \right]^2,
  \label{equ:objective}
\end{equation}
where ${\rho}$ is a weight factor. In the above formula, $\rho E(\tau, u)$ compensates for the inconsistency between the global action-value function $Q_\glb(\tau, u)$ and the summation of individual action-value function $Q_i(\tau_i, u_i)$, which arises from the partial observation of each agent. \textit{The optimization of the above objective can satisfy the sufficiency and necessity of the IGM condition, whose theoretical introduction and proof are deferred to in Section~\ref{sec:theo}.}

\begin{Remark}
The optimized bias term in Eq.~(\ref{equ:e_glb}) is originally inspired by the Tchebycheff approach of multi-objective optimization (MOO)~\cite{Tchebycheff}. But in essence, they are different in the input domain. Here we first review the MOO problem, which optimizes simultaneously two or more objective functions and searches the Pareto-optimal solution~\cite{qian2015constrained}, formally, 
\begin{align}
    \min_x \quad&\{f_1(x), f_2(x), \cdots, f_K(x)\},\\
        \text{s.t.,} \quad &
        g_i(x) \leq 0, \quad i = 1, 2, \cdots, M, \\
        & h_j(x) = 0, \quad j = 1, 2, \cdots, L,
    \label{con:MOOP}
\end{align}
where function ${f_k(x)}$ is the k-th objective function and ${k = 1, 2, \cdots, K}$. The constraints of the MOO problem consist of ${M}$ inequality constraints ${[g_i(x) \leq 0]_{i=1}^{M}}$ and ${L}$ equality constraints ${[h_j(x) = 0]_{j=1}^{L}}$. It is worth noting that the solution $x$ is shared by all objective functions. 

To solve the MOO problem, the Tchebycheff approach~\cite{Tchebycheff} is proposed to transform the original MOO problem into a single-objective Optimization (SOO) counterpart by a nonlinear aggregation function, i.e., the optimal objective is to suppress the bias:
$\max\limits_{1\leq k \leq K}{\left| f_k(x) - f_k^* \right|}$, where $f_k^*$ is the best accessible value of the $k$-th function.
Thus, the Tchebycheff MOO has the idea that seeks a tight upper bound of all objective bias and optimizes the upper bound. In this work, we adapt the idea into Eq.~(\ref{equ:e_glb}).  Obviously, the bias function in Eq.~(\ref{equ:e_glb})) uses different domains, while the MOO function shares a domain. \textbf{In other words, our method does not fall into the category of MOO, so the Pareto-optimal solution is not yet suitable here. Importantly, we find that the multiagent optimization objective in Eq.~(\ref{equ:objective}) with the revised Tchebycheff error term in Eq.~(\ref{equ:e_glb}) works well as verified in experiments and guaranteed in theory.}
\end{Remark}

\subsection{Theoretical Guarantee of Value Decomposition}
\label{sec:theo}
For a given joint action-observation historical trajectory $\tau$ and action $u$, we can note that the global action-value function $Q_\glb(\tau,u)$ for any arbitrarily factorizable MARL tasks can be decomposed into individual action-value functions ${[Q_i(\tau_i,u_i)]_{i=1}^N}$ by the definition of IGM condition. In Theorem~\ref{the:TVDO}, we provide a theoretical guarantee of such a value decomposition, which satisfies the sufficiency and necessity of the IGM condition. Let $\overline{u}$ denote the collection of optimal individual action $[\overline{u}_i]^N_{i=1}$ and $\overline{u}_i = \arg\max\limits_{u_i}{Q_i(\tau_i,u_i)}$. We illustrate the theorem and its detailed proof below.

\begin{theorem}
    The global action-value function $Q_\glb(\tau,u)$ for any arbitrarily factorizable MARL tasks is factorized by ${[Q_i(\tau_i,u_i)]_{i=1}^N}$, if
    \begin{subequations}
       \begin{numcases}
           {\sum_{i=1}^{N}{Q_i(\tau_i,u_i)} - Q_\glb(\tau,u) + \rho E(\tau,u) = }
            0, u = \overline{u}, \label{con:TVDO1} \\
            \geq 0, u \neq \overline{u}, \label{con:TVDO2} 
       \end{numcases} \label{con:TVDO}
    \end{subequations}
    where
    \begin{equation}
        \vspace{-0.3cm}
        \left\{
            \begin{array}{c c}
                 \vspace{0.3cm}
                 E(\tau,u) = \max\limits_{1\leq i \leq N}{ \{ \left| Q_i(\tau_i,u_i) - Q_i(\tau_i,\overline{u}_i) \right| \} }. & \\
                 \vspace{0.2cm}
                 \frac{Q_{\glb}(\tau, \overline{u}) - \sum_{i=1}^{N}{Q_i(\tau_i, u_i)}}{\max\limits_{1\leq i \leq N}{\{ \left| Q_i(\tau_i,u_i) - Q_i(\tau_i,\overline{u}_i) \right| \}}} \leq \rho. & \\
                 \frac{\sum_{i=1}^N{\{ \left| Q_i(\tau_i,u_i) - Q_i(\tau_i,\overline{u}_i) \right| \}}}{\max\limits_{1\leq i \leq N}{\{ \left| Q_i(\tau_i,u_i) - Q_i(\tau_i,\overline{u}_i) \right| \}}} \geq \rho.
            \end{array}
        \right. \label{con: e_g} 
    \end{equation}
    \label{the:TVDO}
\end{theorem}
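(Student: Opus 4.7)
The plan is to establish the IGM condition (\ref{con:IGM}) directly, which identifies $\overline{\mu}$ as the joint greedy action and thereby exhibits $Q_g$ as factorized by $\{Q_i(\tau_i,\mu_i)\}_{i=1}^N$. First I would instantiate (\ref{con:TVDO1}) at $\mu=\overline{\mu}$: since $\lvert Q_i(\tau_i,\overline{\mu}_i)-Q_i(\tau_i,\overline{\mu}_i)\rvert=0$ for every agent $i$, the penalty $E_g(\tau,\overline{\mu})$ vanishes and (\ref{con:TVDO1}) collapses to the anchor identity
\[
Q_g(\tau,\overline{\mu}) \;=\; \sum_{i=1}^{N} Q_i(\tau_i,\overline{\mu}_i),
\]
pinning down the value of $Q_g$ at the joint greedy action in terms of the individual utilities.

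Next I would handle any $\mu\neq\overline{\mu}$. Because $\overline{\mu}_i=\arg\max_{\mu_i}Q_i(\tau_i,\mu_i)$, each absolute value simplifies, so $E_g(\tau,\mu) = \rho\cdot\max_{1\leq i\leq N}\bigl[Q_i(\tau_i,\overline{\mu}_i)-Q_i(\tau_i,\mu_i)\bigr]$. The second fractional $\rho$-bound in (\ref{con: e_g}) rearranges to
\[
E_g(\tau,\mu) \;\leq\; \sum_{i=1}^{N}\bigl[Q_i(\tau_i,\overline{\mu}_i)-Q_i(\tau_i,\mu_i)\bigr].
\]
Plugging this into the inequality (\ref{con:TVDO2}) and invoking the anchor identity yields
\[
Q_g(\tau,\mu) \;\leq\; \sum_{i=1}^{N}Q_i(\tau_i,\mu_i)+E_g(\tau,\mu) \;\leq\; \sum_{i=1}^{N}Q_i(\tau_i,\overline{\mu}_i) \;=\; Q_g(\tau,\overline{\mu}).
\]
Hence $\overline{\mu}$ maximises $Q_g(\tau,\cdot)$, and since each $\overline{\mu}_i$ is chosen greedily per agent, this is exactly the IGM condition (\ref{con:IGM}).

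I would close by using the first $\rho$-bound in (\ref{con: e_g}) to certify that (\ref{con:TVDO2}) is not vacuous. Under the anchor identity, that bound becomes $E_g(\tau,\mu)\geq\sum_i Q_i(\tau_i,\overline{\mu}_i)-\sum_i Q_i(\tau_i,\mu_i)$, so the slack on the right-hand side of (\ref{con:TVDO2}) is genuinely nonnegative and the decomposition is realisable without any additional structural constraint on the $Q_i$'s.

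The hard part will be the coupled bookkeeping on $E_g$: the Tchebycheff penalty must be simultaneously large enough to satisfy (\ref{con:TVDO2}) and tight enough to force $Q_g(\tau,\mu)\leq Q_g(\tau,\overline{\mu})$. The two fractional inequalities in (\ref{con: e_g}) squeeze $\rho$ to the state-dependent ratio $\sum_i\lvert\cdot\rvert/\max_i\lvert\cdot\rvert\in[1,N]$, and I would pay particular attention to verifying that these two bounds are mutually consistent for every non-greedy $\mu$. This consistency is precisely what distinguishes the Tchebycheff aggregation from VDN/QMIX and lets the factorization satisfy IGM without the extra affine-transformation or positive-weight restrictions required by QTRAN and QPLEX.
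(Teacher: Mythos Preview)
Your sufficiency argument is correct and mirrors the paper's: anchor $Q_g(\tau,\overline{\mu})=\sum_i Q_i(\tau_i,\overline{\mu}_i)$ via (\ref{con:TVDO1}), bound $E_g(\tau,\mu)$ above by $\sum_i\lvert Q_i(\tau_i,\mu_i)-Q_i(\tau_i,\overline{\mu}_i)\rvert$ using the upper $\rho$-inequality, and chain with (\ref{con:TVDO2}) to obtain $Q_g(\tau,\mu)\le Q_g(\tau,\overline{\mu})$. The paper's proof additionally establishes the converse (IGM $\Rightarrow$ (\ref{con:TVDO})) via the lower $\rho$-bound, which your third paragraph touches only as a consistency check rather than a full necessity argument, but since the theorem as stated is an ``if'' this is not a gap.
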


\begin{proof}
    Sufficiency: \textbf{Theorem \ref{the:TVDO}} indicates that the condition (\ref{con:TVDO}) can derive {IGM}. Therefore, for given individual action-value function $Q_i(\tau_i, u_i)$ that satisfies Eq.~(\ref{con:TVDO}), we will show that $\arg\max\limits_{u}{Q_{\glb}(\tau, u)} = \overline{u}$. That is, we need to prove $Q_\glb(\tau, \overline{u}) \geq Q_{\glb}(\tau, u)$.
    \begin{equation}
        \begin{split}
            Q_\glb(\tau,\overline{u}) &= \sum_{i=1}^{N}{Q_i(\tau_i,\overline{u}_i)} + \rho E(\tau, \overline{u}) \quad (From~(\ref{con:TVDO1}))  \\
            &= \sum_{i=1}^{N}{Q_i(\tau_i,\overline{u}_i)} + \rho \max\limits_{1\leq i \leq N}{\{\left| Q_i(\tau_i,\overline{u}_i) - Q_i(\tau_i,\overline{u}_i) \right|\}} \\
            &= \sum_{i=1}^{N}{Q_i(\tau_i,\overline{u}_i)} + \sum_{i=1}^{N}{Q_i(\tau_i,u_i)} - \sum_{i=1}^{N}{Q_i(\tau_i,u_i)} \\
            &= \sum_{i=1}^{N}{Q_i(\tau_i,u_i)} + \sum_{i=1}^{N}{\{ \left| Q_i(\tau_i,\overline{u}_i) - Q_i(\tau_i,u_i) \right |\}} .
        \end{split} \label{equ: suf_TVDO_1}
    \end{equation}
    According to the condition (\ref{con: e_g}) in Theorem \ref{the:TVDO}, the value of $\rho$ is less than or equal to $\frac{\sum_{i=1}^N{\{ \left| Q_i(\tau_i,u_i) - Q_i(\tau_i,\overline{u}_i) \right| \}}}{\max\limits_{1\leq i \leq N}{\{ \left| Q_i(\tau_i,u_i) - Q_i(\tau_i,\overline{u}_i) \right| \}}}$, we can find that $\sum_{i=1}^N{\{ \left| Q_i(\tau_i,u_i) - Q_i(\tau_i,\overline{u}_i) \right| \}}$ is greater than or equal to $\rho \cdot \max\limits_{1\leq i \leq N}{\{ \left| Q_i(\tau_i,u_i) - Q_i(\tau_i,\overline{u}_i) \right| \}}$.
    \begin{equation}
        \begin{split}
            Q_\glb(\tau,\overline{u}) &= \sum_{i=1}^{N}{Q_i(\tau_i,u_i)} + \sum_{i=1}^{N}{\{ \left| Q_i(\tau_i,\overline{u}_i) - Q_i(\tau_i,u_i)\right| \}} \\
            &\geq \sum_{i=1}^{N}{Q_i(\tau_i,u_i)} + \rho \max\limits_{1\leq i \leq N}{\{ \left| Q_i(\tau_i,u_i) - Q_i(\tau_i,\overline{u}_i) \right| \}}
        \end{split} \label{equ: suf_TVDO_2}
    \end{equation}
    \begin{equation}
        \begin{split}
            Q_\glb(\tau,\overline{u}) &\geq \sum_{i=1}^{N}{Q_i(\tau_i,u_i)} + \rho E(\tau, u) \\
            &\geq Q_{\glb}(\tau, u). \quad (From~(\ref{con:TVDO2}))
        \end{split} \label{equ: suf_TVDO_3}
    \end{equation}
    It means that the collection of individual optimal action of each agent can maximize the global action-value function $Q_\glb(\tau, u)$, illustrating that individual action-value function $Q_i(\tau_i, u_i)$ satisfies the IGM condition.

    Necessity: \textbf{Theorem \ref{the:TVDO}} shows that if the IGM condition holds, then individual and global action-value functions satisfy (\ref{con:TVDO}). By definition of {IGM} condition, we know that if the joint action-value function $Q_\glb(\tau, u)$ is factorized by the individual action-value function $Q_i(\tau_i, u_i)$ for any cooperative MARL tasks, then the followings hold: (i) $Q_{\glb}(\tau, \overline{u}) = \max\limits_{u}{Q_\glb(\tau, u)}$, (ii) $Q_\glb(\tau, u) \leq Q_\glb(\tau,\overline{u})$. Furthermore, $\max\limits_{1\leq i \leq N}{\{ \left| Q_i(\tau_i,u_i) - Q_i(\tau_i,\overline{u}_i) \right| \}} \geq 0$ holds. 
    
    Let $\Gamma = \sum_{i=1}^{N}{Q_i(\tau_i,u_i)} - Q_\glb(\tau,u) + \rho E_(\tau,u)$, we will prove that $\Gamma \geq 0$ for any joint action of agents.
    \begin{equation}
        \begin{split}
           \Gamma &= \sum_{i=1}^{N}{Q_i(\tau_i,u_i)} - Q_\glb(\tau,u) + \rho E(\tau,u) \\
            &\geq \sum_{i=1}^{N}{Q_i(\tau_i,u_i)} - \max_{u}{Q_\glb(\tau,u)} + \rho E(\tau,u)  \\
            &= \sum_{i=1}^{N}{Q_i(\tau_i,u_i)} - Q_\glb(\tau, \overline{u}) + \rho E(\tau,u).
        \end{split}
    \end{equation}

    Since $\rho \geq \frac{Q_{\glb}(\tau, \overline{u}) - \sum_{i=1}^{N}{Q_i(\tau_i, u_i)}}{\max\limits_{1\leq i \leq N}{\{ \left| Q_i(\tau_i,u_i) - Q_i(\tau_i,\overline{u}_i) \right| \}}}$ in (\ref{con: e_g}), we can find that $\rho \max\limits_{1\leq i \leq N}{\{ \left| Q_i(\tau_i,u_i) - Q_i(\tau_i,\overline{u}_i) \right| \}}$ is greater than or equal to $Q_{\glb}(\tau, \overline{u}) - \sum_{i=1}^{N}{Q_i(\tau_i, u_i)}$. Thus, it follows that

    \begin{equation}
        \begin{split}
           \Gamma &\geq \sum_{i=1}^{N}{Q_i(\tau_i,u_i)} - Q_\glb(\tau, \overline{u}) + \rho E(\tau,u) \\
            &\geq \sum_{i=1}^{N}{Q_i(\tau_i,u_i)} - Q_\glb(\tau, \overline{u}) + [Q_{\glb}(\tau, \overline{u}) - \sum_{i=1}^{N}{Q_i(\tau_i, u_i)}] \\
            &= 0.
        \end{split}
    \end{equation}

    Since the factorizable global action-value function and the IGM condition, it suffices to prove that (\ref{con:TVDO}) holds for any joint action $u$. This completes the proof.
\end{proof}

According to Theorem~\ref{the:TVDO}, we can observe that the optimal solution in the objective function in Eq.~(\ref{equ:objective}) is zero, i.e., the case $u=\overline{u}$ in the condition~(\ref{con:TVDO}). At the same time, we limit the weight factor $\rho$ in a bounded range, whose derivation is deferred in the Appendix~\ref{app:derivation}. Furthermore, the lower bound is always lower than or equal to the upper bound for the weight factor, whose proof is deferred in the Appendix~\ref{app:proof}. This theorem indicates that, if the condition in Eq.~(\ref{con:TVDO}) holds, then the global action-value function could be decomposed into the summation of individual action-value functions, i.e., the IGM condition. 
The reason is that when global action is the collection of optimal individual actions $u = (\overline{u}_1, \overline{u}_2, \cdots, \overline{u}_N)$, the discrepancy of action-value between the global action and individual optimal policies will be eliminated, i.e, $E(\tau, u) = 0$, thus  $Q_\glb(\tau,u)=\sum_{i=1}^{N}{Q_i(\tau_i,u_i)}$. It implies that centralized policies could be fully consistent with decentralized policies. In the case of a non-optimal solution, 
the bias $E\neq 0$ aims to compensate for the error of action-value decomposition. 
Under the condition in the Theorem~\ref{the:TVDO}, we could optimize the factorized Tchebycheff value-decomposition by parameterizing the action-value function $Q$ with a neural network, which will be introduced in the section~\ref{sec:tvdo}.

\subsection{Tchebycheff Value Decomposition Optimization}
\label{sec:tvdo}
To handle the above objective function, we design a deep reinforcement learning network framework to enable an end-to-end learning of policy. According to the above theorem, we need to compute the value of $Q$ and estimate the bound for setting a proper $\rho$. For the computation of $Q$, we take the architecture of VDN~\cite{lowe2017multi} as the backbone. For the setting of $\rho$, we simply use its upper bound which works well in practice, because the lower bound depending on the terminal actions is intractable to estimate. The detail is illustrated below.

For each agent ${i}$, we learn an independent agent network to estimate the individual action-value function $Q_i(\tau_i, u_i,\theta_i)$, parameterized by $\theta_i$. It takes the individual action-observation historical trajectory as input at time $t$, i.e., $\tau_i=(o^1_i, u^t_i, \cdots, o^{t-1}_i, u_i^{t-1}, o^t_i)$, and then estimate an action-value vector $Q_i$ w.r.t all actions. Accordingly, at the stage of policy selection, we could choose an optimal $u_i^t$ with the ${\epsilon}$-greedy way. To estimate the global action-value $Q_\glb$, we design a joint action-value network parameterized by $\phi$.
To learn an efficient decomposition of global action-value, we stabilize the learning by calculating the bias term $E$ for the constraint in Eq.~(\ref{con:TVDO}). Thus the accumulation of individual action-value could approximate the joint action-value, and the optimal actions derived from them are identical. In addition, we use the double Q-value network idea introduced in DQN~\cite{2015human} that the parameters of the target network are frozen for a fixed number of steps while updating the main network.

Besides, there is a challenge in choosing the parameter~$\rho$. According to {Theorem \ref{the:TVDO}}, the parameter $\rho$ should be limited in a certain range for conforming the IGM condition.
Since the optimal global action-value function $Q_\glb(\tau, \overline{u})$ can not be precisely estimated during training, i.e., the lower bound cannot be computed, thus we directly use the upper bound of the range to define the parameter $\rho$. Theorem~\ref{the:TVDO} indeed holds only if the value of weight factor $\rho$ is in the range of bounds. In other words, sufficiency and necessity are both satisfied even in the case of the upper bound. Further, we employ the momentum update way for $\rho$, formally,
\begin{equation}
    {\rho} \leftarrow \alpha\rho + (1-\alpha) \frac{\sum_{i=1}^N{\{ \left| Q_i(\tau_i,u_i) - Q_i(\tau_i,\overline{u}_i) \right| \}}}{\max\limits_{1\leq i \leq N}{\{ \left| Q_i(\tau_i,u_i) - Q_i(\tau_i,\overline{u}_i) \right| \}}} ,
    \label{equ:rho}
\end{equation}
where $\alpha$ means a momentum coefficient.

Based on the above idea, TVDO minimizes the following loss to train the individual action-value network $\theta_i$ in the paradigm of \textbf{CTDE}:
\begin{equation}
    \mathcal{L}_{td}(;\theta) = [\sum_{i=1}^{n}Q_i(\tau_i,u_i,\theta_i) - Q_\glb(\tau,u,\phi) + \rho E(\tau, u)]^2 ,
    \label{equ:loss}
\end{equation}
where
\begin{equation}
    E(\tau,u) = \max\limits_{1\leq i \leq N}{\left| Q_i(\tau_i,u_i,\theta_i) - Q_i(\tau_i,\overline{u}_i,\overline{\theta_i}) \right|}.
    \label{equ:obejective fun}
    \nonumber
\end{equation}
where $\theta$ and $\overline{\theta}$ represent the parameters of main and target networks, respectively. The global action-value network $\phi$ is optimized by minimizing the squared TD error. Under the factorizable action-value function in Theorem \ref{the:TVDO}, the group of individual policies converges to the global optimum. For the sake of comprehensiveness, we provide the training process for the proposed TVDO method in Algorithm~\ref{alg:TVDO}. \textit{More detail of the network and hyperparameters will be given in Section~\ref{sec:network}.}

\begin{table*}[t]
    \vspace{-0.3cm}
    \centering
    \renewcommand{\arraystretch}{1.1}
    \caption{The comparison of all value-decomposition baselines and our method.}
    \begin{tabular}{c c c}
        \hline
        \textbf{Method} & \textbf{Decomposition structure} & \textbf{condition for IGM}\\ 
        \hline
        VDN~\cite{lowe2017multi} & additivity & sufficient\\
        QMIX~\cite{rashid2018qmix} & monotonicity & sufficient\\
        QTRAN~\cite{son2019qtran} & soft regularization & sufficient and necessary with extra constraint\\
        Weighted QMIX~\cite{rashid2020weighted} & weighted projection & sufficient\\
        Qatten~\cite{yang2020qatten} & linear constraint & sufficient\\
        QPLEX~\cite{wang2021qplex} & duplex dueling & sufficient and necessary with extra constraint\\
        \textbf{TVDO (ours)} & \textbf{Tchebycheff bias} & \textbf{sufficient and necessary} \\
        \hline
    \end{tabular}
    \label{t: constraint for methods}
    \vspace{-0.3cm}
\end{table*}

\begin{algorithm}[!htb]
    \caption{TVDO}
    \label{alg:TVDO}
    \begin{algorithmic}[1]
        \REQUIRE the set of individual observation $\{o_i\}^N_{i=1}$ and action $\{u_i\}^N_{i=1}$, discount factor $\gamma$, decay $\lambda$, $\rho$, $\epsilon$   \\
        \ENSURE individual action-value networks $\{Q_{\theta_i}\}^n_{i=1}$ and global action-value network $Q_\glb$ with parameter $\phi$;\\
        \STATE Initializing: replay buffer D, the individual current and target individual action-value networks with random parameters $\{\theta_i\}^N_{i=1}$, $\{\overline{\theta}_i\}^N_{i=1}$;
        \FOR{episode = 1 to max-training-episode}
            \STATE Initialize the environment;
            \FOR{${t = 1}$ to max-episode-length}
                \FOR{each agent ${\{a_i\}^N_{i=1}}$}
                    \STATE Get individual action-value $\!Q_i\!$ by feeding $\tau^t_i \!\!=\!\!\! \{o^1_i, u^1_i, \cdots, o^{t-1}_i, u^{t-1}_i, o^t_i \}$ into current action-observation  network $Q_{\theta_i}$;
                    \STATE Select a random action $u^t_i$ within the probability $\epsilon$, otherwise select action $u^t_i = \arg\max \limits_{u_{i}}{Q_{\theta_{i}}(\tau_i, u_i)}$;
                 \ENDFOR
                 \STATE Execute actions $u^t = (u^t_1, u^t_2, \cdots, u^t_N)$ to obtain the environment reward $r^t$ and the observation $o^{t+1}_i$;
                 \STATE Store $(o^t, u^t, r^t, o^{t+1}, done^t)$ in replay buffer D;
            \ENDFOR
            \FOR{agent ${i = 1}$ to ${N}$}
                \STATE Sample a random minibatch of $M$ samples from D:$(o_m, u_m, r_m, o^{\next}_m, done_m)$;
                \STATE Update $\theta_i$ by minimizing the loss in Eq.~(\ref{equ:loss});
            \ENDFOR
            \STATE Update $\phi$ by minimizing the square TD error;
            \IF{step\%$C$ == 0}
            \STATE Update target action-value network: $\overline{\theta}_i = \theta_i$;
            \STATE Update the parameter $\rho$ according to Eq.~(\ref{equ:rho});   
            \ENDIF
        \ENDFOR
    \end{algorithmic}
\end{algorithm}

\section{Discussion: Ours vs Previous Decomposition}
As presented in Section~\ref{releated work}, a lot of value decomposition approaches have been proposed recently for any factorizable MARL tasks, such as VDN~\cite{lowe2017multi}, QMIX~\cite{rashid2018qmix}, QTRAN~\cite{son2019qtran}, Weighted QMIX~\cite{rashid2020weighted}, Qatten~\cite{yang2020qatten}, and QPLEX~\cite{wang2021qplex}. As representative works, VDN~\cite{sunehag2017value} and QMIX~\cite{rashid2018qmix} learn a linear value decomposition by using the additivity and the monotonicity. However, both of them are built on the condition of the structure constraints, and only satisfy the sufficient condition of IGM. In contrast, QTRAN~\cite{son2019qtran} presents a novel factorization approach to release the restrictive structural constraint. However, QTRAN requires an extra limitation in the affine transformation, while it provides a sufficient and necessary condition of IGM. In addition, Weighted QMIX~\cite{rashid2020weighted} uses weighted projection that places more importance on better joint actions to solve the suboptimal policy problem existing in QMIX, but still needs the monotonicity constraint. As discussed in QPLEX~\cite{wang2021qplex}, Qatten~\cite{yang2020qatten} is an extensive work of VDN, which approximately estimates the joint action-value function by incorporating the multi-head attention mechanism to the learning of Q-value mixed network. More recently, QPLEX~\cite{wang2021qplex} introduces the duplex dueling structure to synchronize the action selection between the global and individual action-value functions. In essence, these methods use different forms to establish the relationship between the global action-value function and the individual counterpart. The comparison between the above methods and our TVDO method is shown in Table~\ref{t: constraint for methods}. We can notice that all approaches only present the sufficient condition of IGM except for QTRAN~\cite{son2019qtran}, QPLEX~\cite{wang2021qplex}, and our method. Although QTRAN achieves better performance on the learning optimality and stability, it needs the extra condition of an affine transformation, which may result in unsatisfactory performance in some complicated and real scenarios such as StarCraft \Rmnum{2}. Furthermore, QPLEX also requires an extra constraint, i.e., the positive importance weights, while it provides the theoretical guarantee of the IGM condition. 

Particularly, \textbf{the similarities} between ours and QTRAN~\cite{son2019qtran} are: \textbf{(i)} The value decomposition takes the linear weighting style, also often used in previous works such as VDN and QMIX, thus ours and QTRAN look similar in the form of value factorization, but which is not our contribution. \textbf{(ii)} For a fair comparison, we use some common basic network units (e.g., individual action-value network), which are framed in the unit code scheme to conveniently evaluate those classic methods. Although our method looks similar to QTARN, they are different. More importantly, \textbf{the differences} between ours and QTRAN are: (i) The motivation: our method introduces the bias inspired by the Tchebycheff approach of multi-objective optimization, while QTRAN directly derives a discrepancy term between global and individual policies. (ii) The constraint terms, i.e., the optimized bias $E_\glb(\tau, u)$ in ours and the discrepancy $V_{\glb}(\tau, u)$ in QTRAN, are different in design rules, and the theoretical proofs are also distinct. (iii) The bias $E_\glb(\tau, u)$ could be calculated exactly, while the discrepancy $V_{\glb}(\tau, u)$ needs to be estimated by one network. (iv) Ours satisfies the necessity and sufficiency of the IGM condition without any extra limitations, while QTRAN needs the extra constraint of an affine transformation.

\begin{figure*}[t]
    \centering
    \subfigure[The payoff of climb and penalty game]{
        \includegraphics[width=5.8cm]{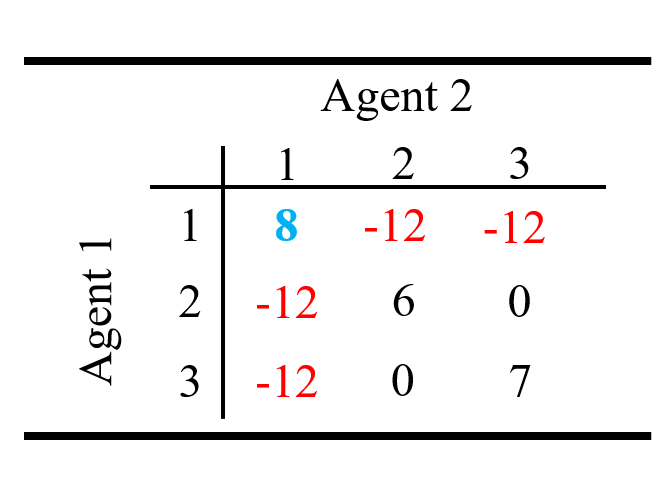}
        \label{fig:payoff_matrices_climb_and_penalty_game}
    }
    \subfigure[Median Episode Reward]{
        \includegraphics[width=5.4cm]{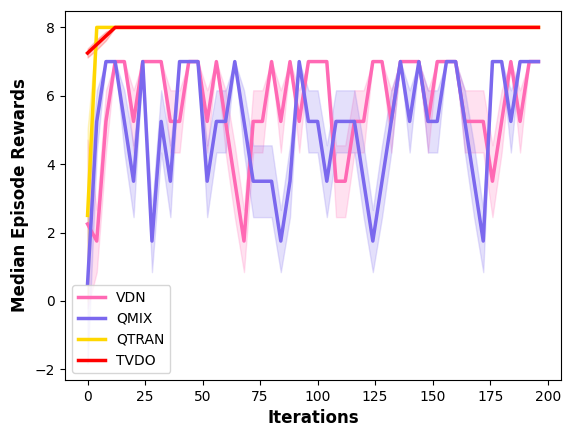}
        \label{fig:episode_rewards_TVDO_vs_VDN_QMIX}
    }
    \subfigure[Median Episode Reward]{
        \includegraphics[width=5.2cm]{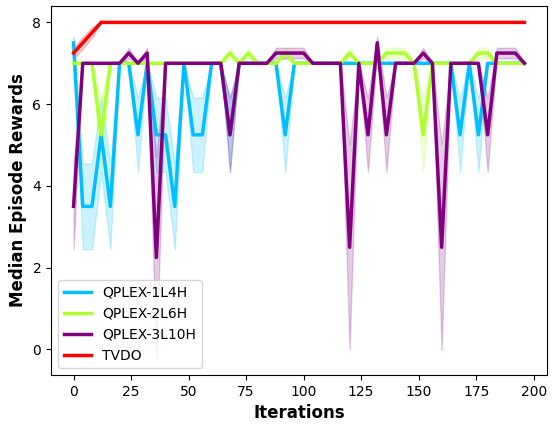}
        \label{fig:episode_rewards_TVDO_vs_QPLEX}
    }
    \caption{(a) The payoff matrices of 3 $\times$ 3 climb and penalty game for 2 agents. Both agents gain the same payoff for a joint action. (b) The median episode rewards of our method(TVDO) vs VDN, QMIX, and QTRAN. (c) The median episode rewards of our method(TVDO) vs QPLEX. In particular, QPLEX with $a$L$b$H denotes the network with $a$ layers and $b$ heads (multi-head attention), respectively.}
    \label{fig:climb_and_penalty_game_result}
\end{figure*}

\section{Climb and Penalty Game}
To illustrate the complete representation capacity of our method compared with existing value decomposed MARL algorithms including VDN~\cite{sunehag2017value}, QMIX~\cite{rashid2018qmix}, QTRAN~\cite{son2019qtran}, QPLEX~\cite{wang2021qplex}, we consider the climb and penalty game (or Matrix Game) from previous literature~\cite{claus1998, panait2006}. This is a simple single-stage cooperative game for 2 agents, which is shown in Figure ~\ref{fig:payoff_matrices_climb_and_penalty_game}. In specific, agent 1 and 2 each have three actions at their disposal. Agents in the climb domain receive maximum payoff (as the blue number in Figure ~\ref{fig:payoff_matrices_climb_and_penalty_game}) when both agents select action 1 (the form of joint action is referred to (1, 1)). However, the team reward matrix has a second equilibrium when they both choose action 3. Due to that the joint reward of (3, 3) is lower than at (1, 1), and the joint action is a suboptimal equilibrium. Moreover, the joint action (2, 2) is a third equilibrium. Additionally, agents obtain a vital penalty when they choose the joint action (1, 2), (1, 3), (2, 1) or (3, 1). Therefore, agents aim to select the optimal joint actions in the game. In particular, the difficulty of this game arises from the penalties incurred when joint actions are not coordinated effectively. Furthermore, the presence of suboptimal collaborations that manage to avoid penalties adds an additional challenge to the game.

We train our method TVDO and other value-decomposed MARL algorithms on this game for 80,000 episodes and examine the final value functions in the limit of complete exploration ($\epsilon$ = 1). Specifically, complete exploration is to ensure that each approach explores all game states. Furthermore, individual action-value function networks consist of two hidden layers, which are shared across all baselines. The global action-value network $Q_\glb$ is composed of two hidden layers, each consisting of 32 units and ReLU non-linearities. All neural networks are trained using the Adam optimizer. The full median episode rewards results of the proposed TVDO method and some MARL baselines are shown in Figure ~\ref{fig:episode_rewards_TVDO_vs_VDN_QMIX} and~\ref{fig:episode_rewards_TVDO_vs_QPLEX}, which show that only TVDO and QTRAN can achieve the optimal performance, while other MARL methods (VDN, QMIX, QPLEX) fall into the suboptimum because of penalty associated with miscoordinated actions. Although QTRAN achieves better performance on the learning optimality and stability, it needs the extra condition of an affine transformation from the joint action-value to individual action-values, which may result in unsatisfactory performance in some complex environments such as StarCraft \Rmnum{2}. In particular, QPLEX introduces a scalable multi-head attention module with different heads of attention and layers (e.g. QPLEX-1L4H, QPLEX-2L6H, QPLEX-3L10H) to learn importance weight. Figure~\ref{fig:episode_rewards_TVDO_vs_QPLEX}, which shows the learning curves of TVDO and QPLEX, demonstrates that TVDO can converge to the optimum whereas QPLEX suffers from the learning optimality and stability while it performs better by increasing the scale of neural network.

\section{Experiments}
\label{sec:exp}
In this section, we use the StarCraft Multi-Agent Challenge (SMAC)~\cite{2019starcraft} benchmark to experimentally evaluate the performance of TVDO. 
All experiments adopt the default settings and are conducted on a 2.90GHz Intel Core i7-10700 CPU, 64G RAM, and NVIDIA GeForce RTX 3090 GPU. Note that all results are based on four training runs with different random seeds in the experiments. In addition, the version of StarCraft \Rmnum{2} used in this work is SC2.4.6.2.69232, which is the same version used as SMAC~\cite{2019starcraft}. 

\subsection{Experimental Setup}
In StarCraft \Rmnum{2}, agents, which select actions that condition on local observation in the limited field of view by a MARL approach, compete against an integrated game AI, striving to defeat their opponents. We perform experiments on a collection of StarCraft \Rmnum{2} micromanagement scenarios, categorized into three levels of difficulty: \textit{Easy, Hard,} and \textit{Super-Hard}. The \textit{Easy} category comprises the following scenarios: 1c3s5z, 2s\_vs\_1sc, 2s3z, 3s\_vs\_3z, 3s5z and 8m. The \textit{Hard} category includes 2c\_vs\_64zg, 3s\_vs\_5z, 5m\_vs\_6m, 10m\_vs\_11m, 25m and bane\_vs\_bane. The \textit{Super-Hard} category encompasses 6h\_vs\_8z, 27m\_vs\_30m, MMM2 and so\_many\_baneling.
The exhaustive list of challenges is presented in Table~\ref{table1}.

\begin{table*}[htbp]
    \centering
    \renewcommand{\arraystretch}{1.1}
    \caption{The StarCraft \Rmnum{2} multiagent challenge [SMAC~\cite{2019starcraft}].}
    \begin{tabular}{c c c c}
        \hline
        \textbf{Map Name} & \textbf{Category} & \textbf{Ally Units} & \textbf{Enemy Units} \\ 
        \hline
        1c3s5z & \multirow{6}{*}{\textit{Easy}} & 1 Colossus, 3 Stalkers \& 5 Zealots & 1 Colossus, 3 Stalkers \& 5 Zealots \\
        2s\_vs\_1sc & & 2 Stalkers & 1 Spine Crawler \\
        2s3z & & 2 Stalkers \& 3 Zealots & 2 Stalkers \& 3 Zealots \\
        3s\_vs\_3z & & 3 Stalkers & 3 Zealots \\
        3s5z & & 3 Stalkers \& 5 Zealots & 3 Stalkers \& 5 Zealots \\
        8m & & 8 Marines & 8 Marines \\
        \hline
        2c\_vs\_64zg & \multirow{6}{*}{\textit{Hard}} & 2 Colossi & 64 Zerglings \\
        3s\_vs\_5z & & 3 Stalkers & 5 Zealots \\
        5m\_vs\_6m & & 5 Marines & 6 Marines \\
        10m\_vs\_11m & & 10 Marines & 11 Marines \\
        25m & & 25 Marines & 25 Marines\\
        bane\_vs\_bane & & 20 Zerglings \& 4 Banelings & 20 Zerglings \& 4 Banelings \\
        \hline
        6h\_vs\_8z & \multirow{5}{*}{\textit{Super-Hard}} & 6 Hydralisks & 8 Zealots \\
        27m\_vs\_30m & & 27 Marines & 30 Marines \\
        MMM2 & & 1 Medivac, 2 Marauders \& 7 Marines & 1 Medivac, 3 Marauders \& 8 Marines \\
        so\_many\_baneling & & 7 Zealots & 32 Banelings \\
        \hline
    \end{tabular}
    \label{table1}
\end{table*}

\begin{table}[htbp]
    \centering
    \renewcommand{\arraystretch}{1.1}
    \caption{The component size of observation and action for all scenarios.}
    \begin{tabular}{c c c c c c}
        \hline
        \textbf{Map Name} & \textbf{Move} & \textbf{Enemy} & \textbf{Ally} & \textbf{Own} & \textbf{Attack\_id}\\ 
        \hline
        1c3s5z & 4 & (9, 9) & (8, 9) & 5 & 9 \\
        2s\_vs\_1sc & 4 & (1, 5) & (1, 6) & 2 & 1  \\
        2s3z & 4 & (5, 8) & (4, 8) & 4 & 5\\
        3s\_vs\_3z & 4 & (3, 6) & (2, 6) & 2 & 3 \\
        3s5z & 4 & (8, 8) & (7, 8) & 4 & 8\\
        8m & 4 & (8, 5) & (7, 5) & 1 & 8\\
        \hline
        2c\_vs\_64zg & 4 & (64, 5) & (1, 6) & 2 & 64 \\
        3s\_vs\_5z & 4 & (5, 6) & (2, 6) & 2 & 5\\
        5m\_vs\_6m & 4 & (6, 5) & (4, 5) & 1 & 6\\
        10m\_vs\_11m & 4 & (11, 5) & (9, 5) & 1 & 11\\
        25m & 4 & (25, 5) & (24, 5) & 1 & 25 \\
        bane\_vs\_bane & 4 & (24, 7) & (23, 7) & 3 & 24  \\
        \hline
        6h\_vs\_8z & 4 & (8, 6) & (5, 5) & 1 & 8 \\
        27m\_vs\_30m & 4 & (30, 5) & (26, 5) & 1 & 30   \\
        MMM2 & 4 & (12, 8) & (9, 8) & 4 & 12 \\
        so\_many\_baneling & 4 & (32, 5) & (7, 6) & 2 & 32  \\
        \hline
    \end{tabular}
    \label{table2}
\end{table}
Agents get local observations $\{o_i\}^N_{i=1}$, which are composed of agent movement, enemy, ally, and agent unit features, in the range of their sight at each time. The dimensionality of the observation vector may fluctuate, contingent upon the specific environment configuration and the types of units existing within the scenario. For example, non-Protoss units typically lack shields, and the inclusion of movement features such as terrain height and pathing grid may vary. Additionally, the unit\_type is excluded if there is only one type of unit present in the map. Agent movement features denote the ability to move in the cardinal directions of north, east, south, and west. In particular, the feature vector of $o_i$ encompasses the following attributes for both allied and enemy units: health, unit\_type, shield, relative x, relative y, and distance. The features of the agent unit contain its shield, health, and unit\_type. All features are normalized by using Min-Max normalization. 

\begin{figure*}[t]
    \centering
    \setcounter{subfigure}{0}
    \subfigure[Average median win rates]{
        \includegraphics[width=6.5cm]{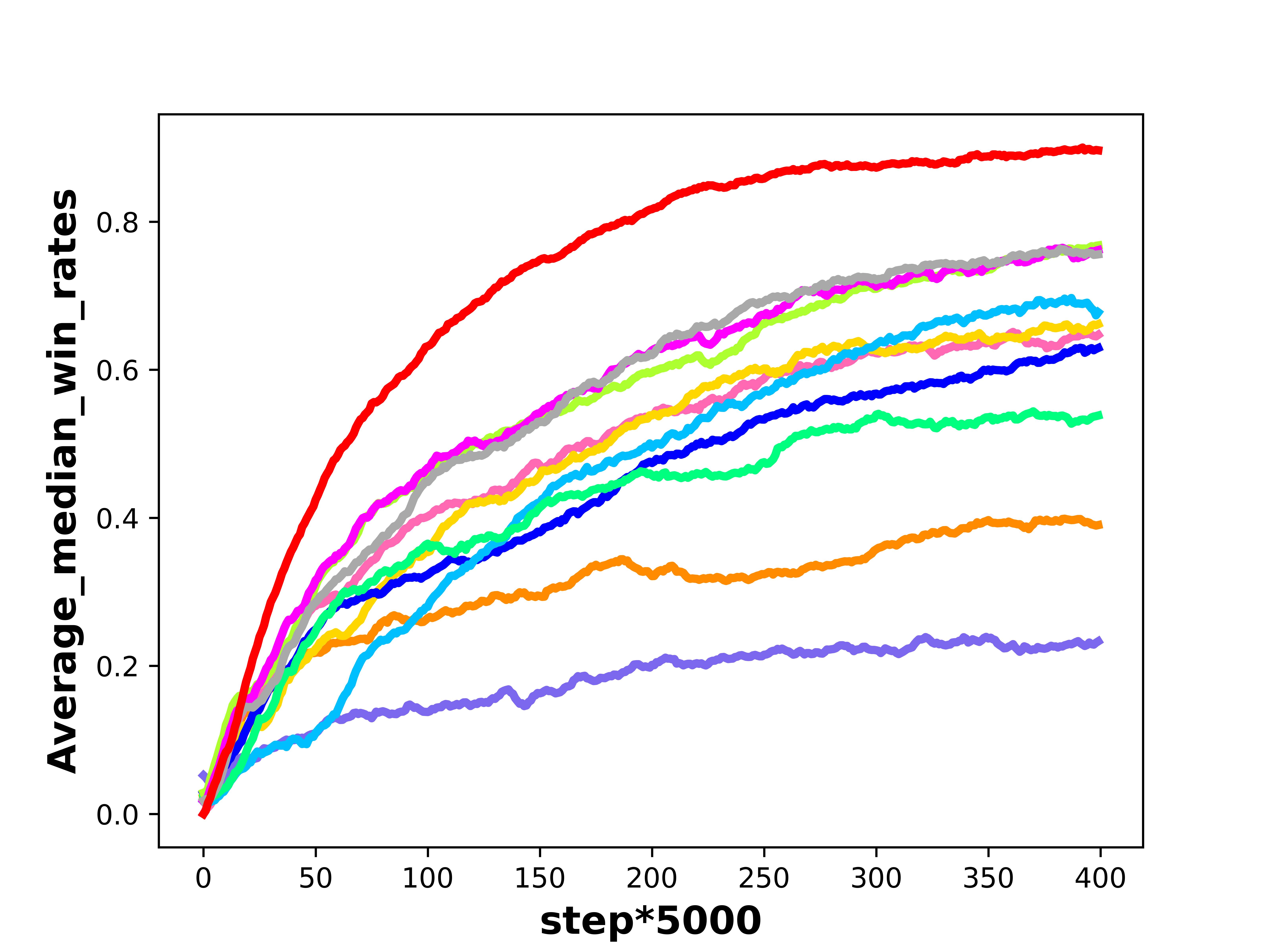}
    }
    \subfigure[Maps best out of 16 scenarios]{
        \includegraphics[width=6.5cm]{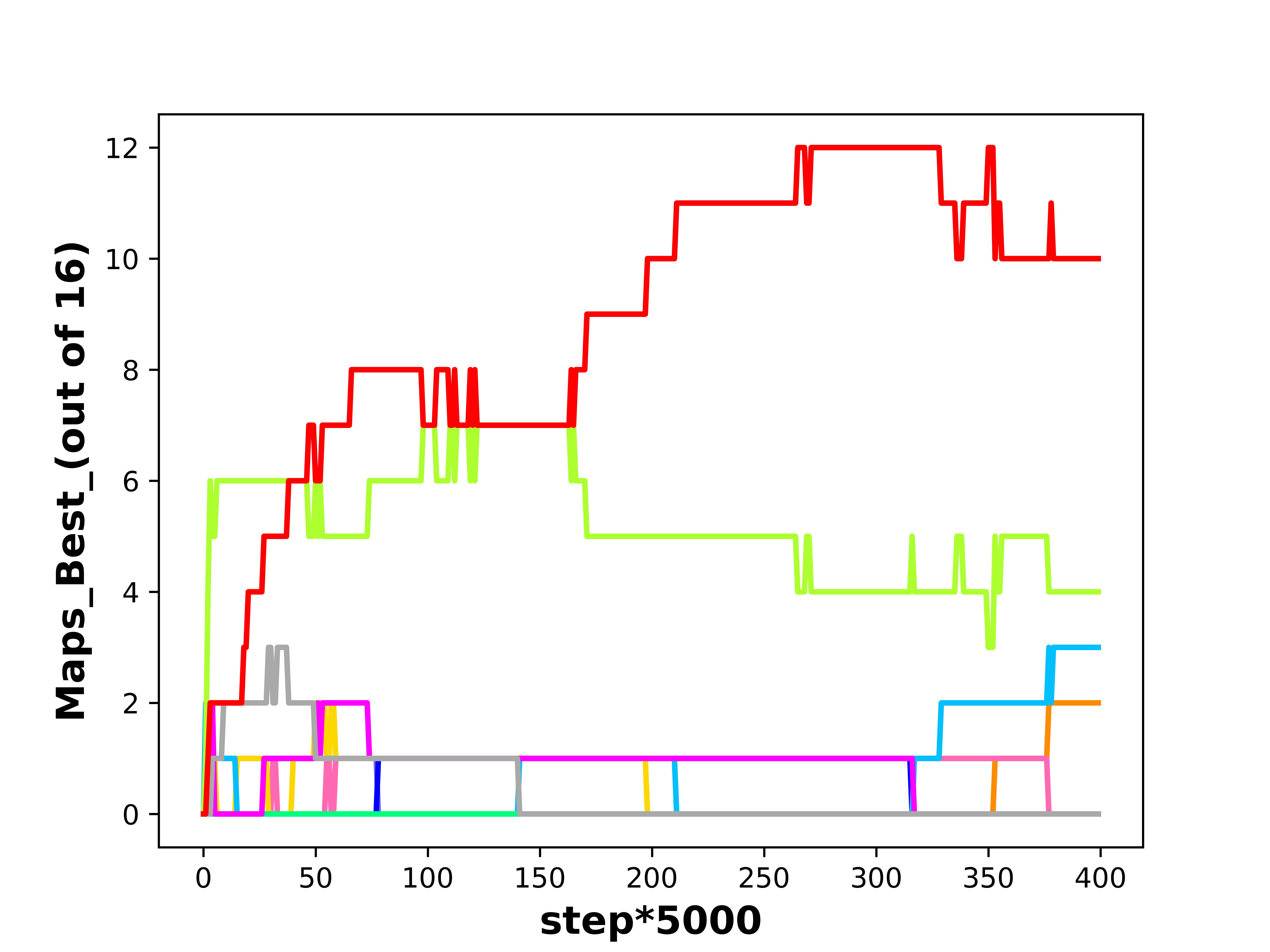}
    }
    \subfigure{
        \includegraphics[width=2.1cm]{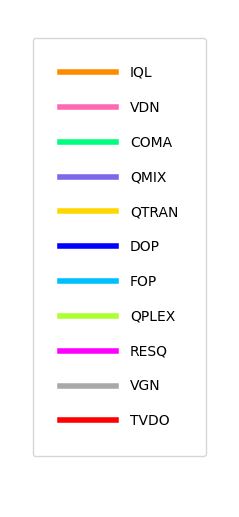}
    }
    \caption{(a) The median win rates, averaged across all 16 scenarios. Heuristic's performance is shown as a dotted line. (b) The number of scenarios, in which the median win rates of algorithms, is the highest by at least 1/32 (smoothed). }
    \label{fig:result_median_all_range}
\end{figure*}

The action space of each agent comprises four features: move direction, no-option, stop, and attack target. Deceased agents are restricted to selecting the no-option feature, whereas living agents are unable to choose it. Each agent has the option to either stop or move in any of the four cardinal directions: north, east, south, or west. Besides, the agent is permitted to execute the attack [enemy\_id] action only if the enemy is within the shooting range or field of attack.
The details of observation and action for each agent are shown in Table~\ref{table2}.

For all battle scenarios, the goal is to maximize the win rate and episode reward. Note that all agents obtain the same global reward, which is equivalent to the sum of the damage inflicted on all enemy agents collectively. The reward mechanism is that agents obtain 10 points for successfully eliminating an enemy unit. Additionally, a bonus of 200 points is awarded to each agent when they collectively eliminate all the enemies. The cumulative reward is normalized to within 20.

\begin{figure*}[t]
    \centering
    \subfigure[1c3s5z]{
        \includegraphics[width=4.1cm]{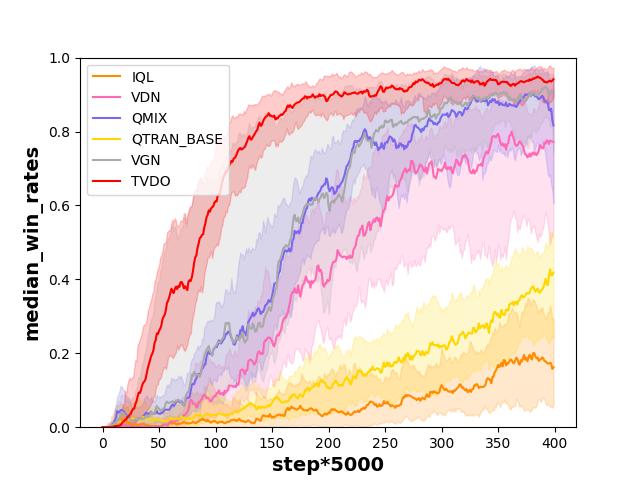}
    }
    \subfigure[2c\_vs\_64zg]{
        \includegraphics[width=4.1cm]{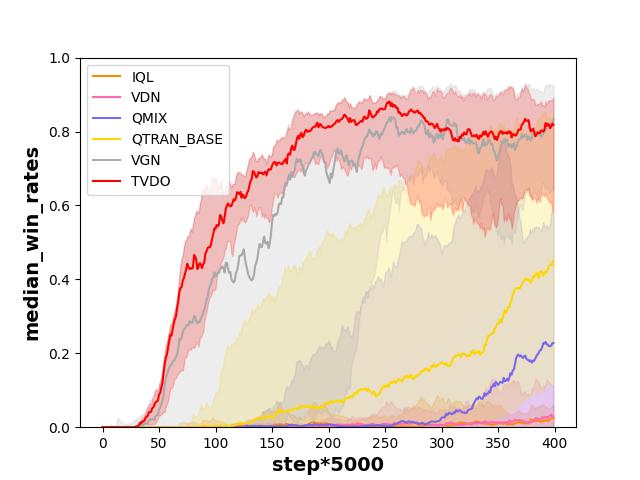}
    }
    \subfigure[2s\_vs\_1sc]{
        \includegraphics[width=4.1cm]{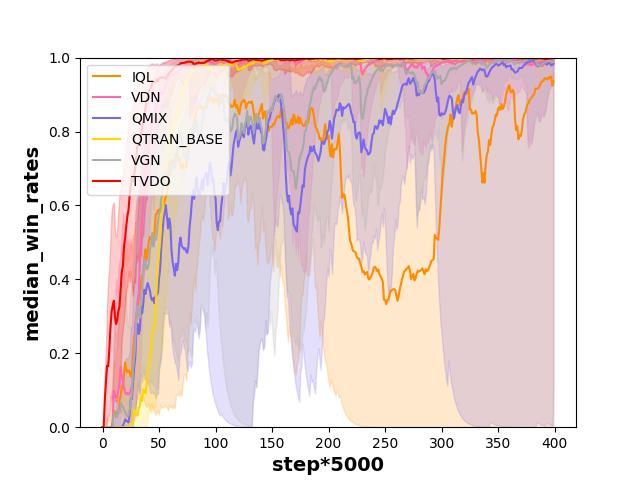}
    }
    \subfigure[2s3z]{
        \includegraphics[width=4.1cm]{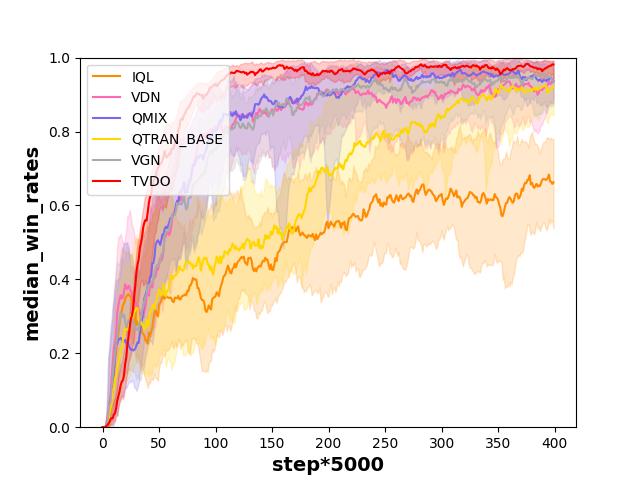}
    }
    \subfigure[3s\_vs\_3z]{
        \includegraphics[width=4.1cm]{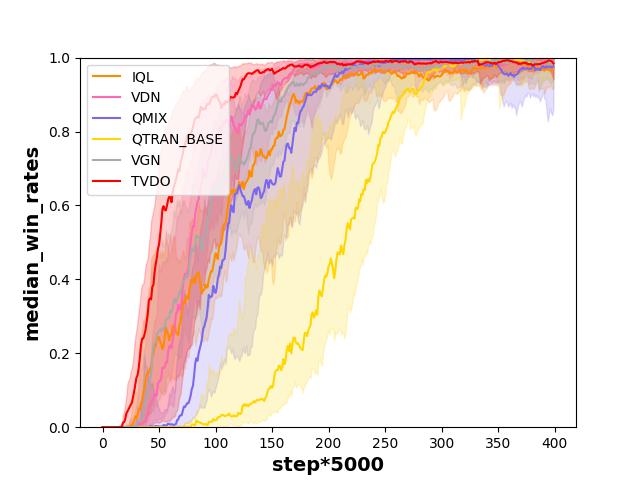}
    }
    \subfigure[3s\_vs\_5z]{
        \includegraphics[width=4.1cm]{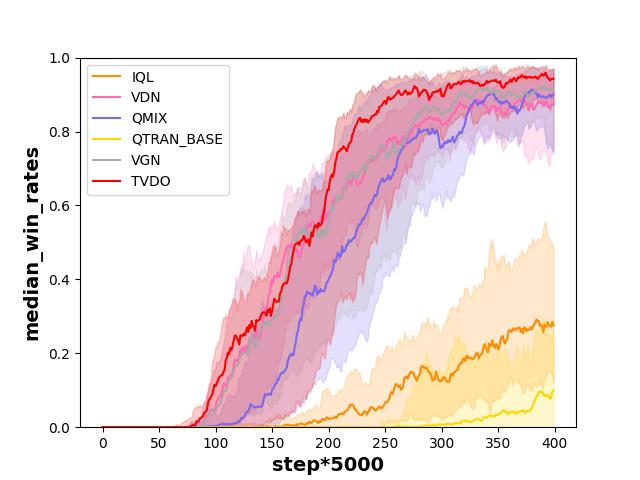}
    }
     \subfigure[3s5z]{
        \includegraphics[width=4.1cm]{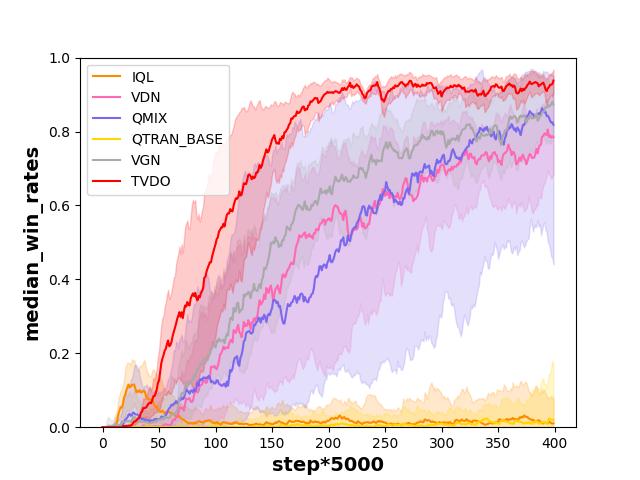}
    }
    \subfigure[5m\_vs\_6m]{
        \includegraphics[width=4.1cm]{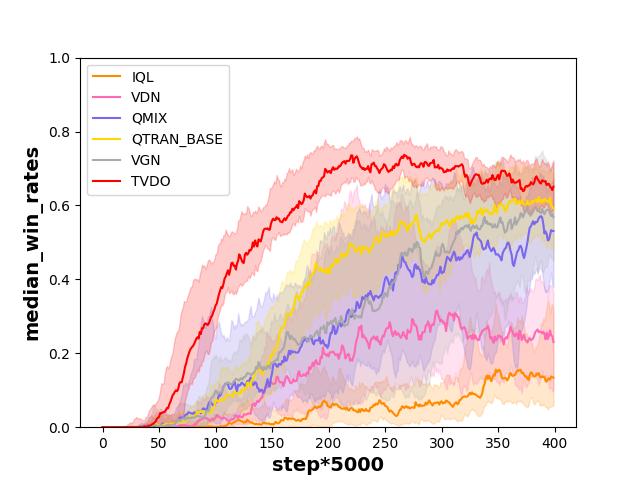}
    }
    \subfigure[6h\_vs\_8z]{
        \includegraphics[width=4.1cm]{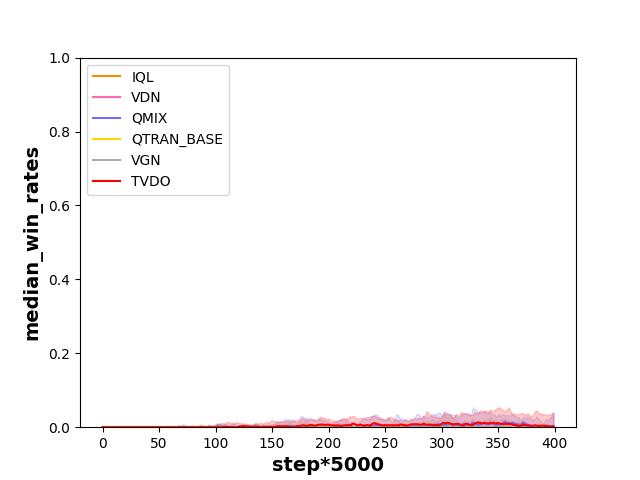}
    }
    \subfigure[8m]{
        \includegraphics[width=4.1cm]{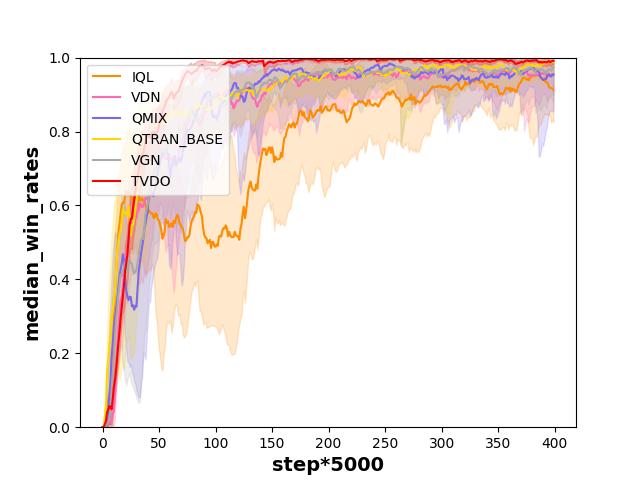}
    }
    \subfigure[10m\_vs\_11m]{
        \includegraphics[width=4.1cm]{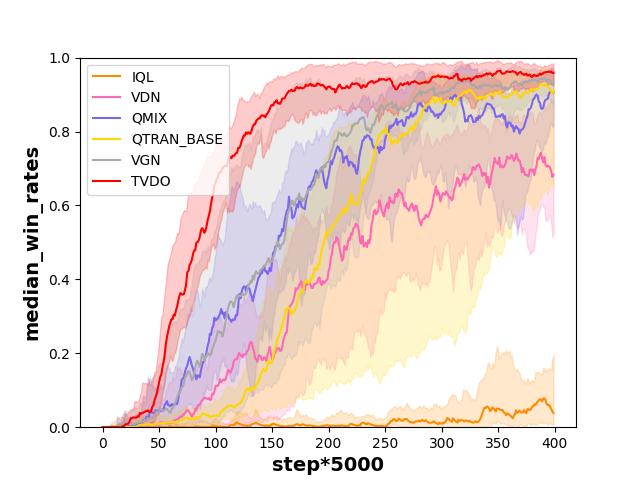}
    }
    \subfigure[25m]{
        \includegraphics[width=4.1cm]{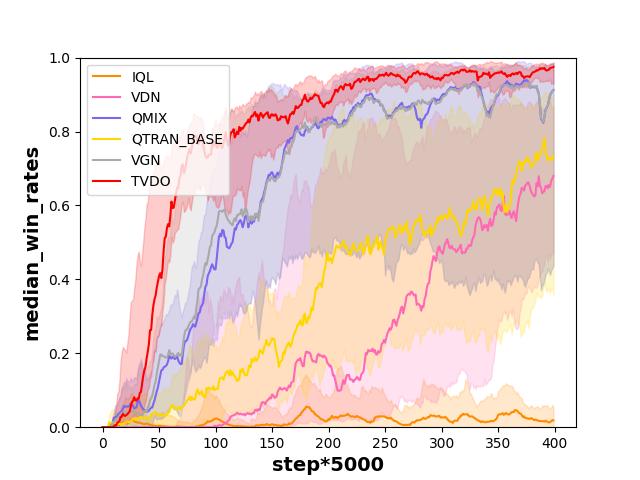}
    }
    \subfigure[27m\_vs\_30m]{
        \includegraphics[width=4.1cm]{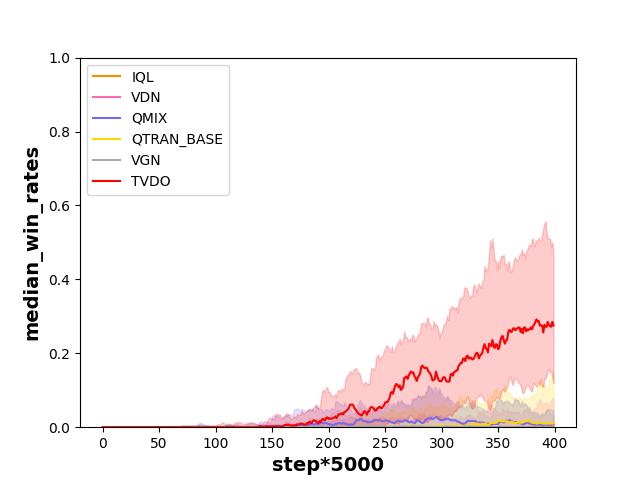}
    }
    \subfigure[bane\_vs\_bane]{
        \includegraphics[width=4.1cm]{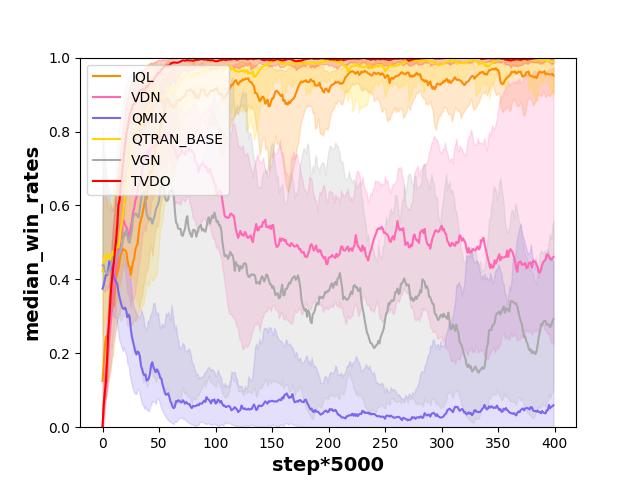}
    }
    \subfigure[MMM2]{
        \includegraphics[width=4.1cm]{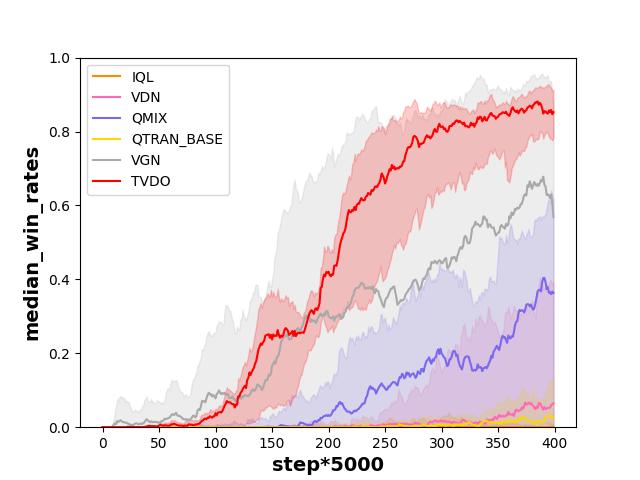}
    }
    \subfigure[so\_many\_baneling]{
        \includegraphics[width=4.1cm]{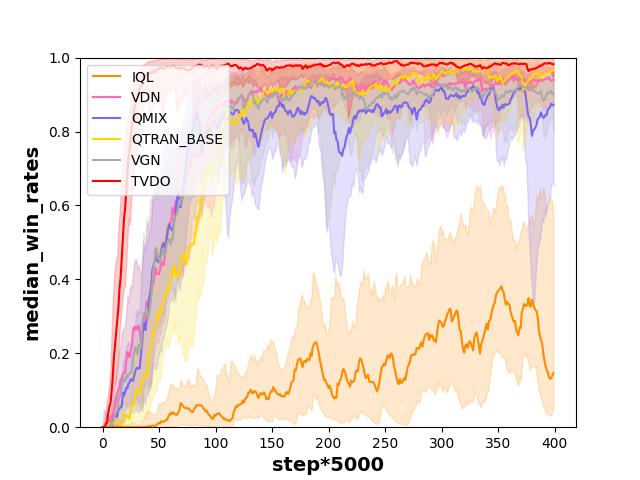}
    }
    \caption{The median win rates for TVDO and competing methods including IQL, VDN, QMIX, QTRAN, and VGN on environment of various difficulty in the SMAC benchmark with variance.}
    \label{fig:ivqq_range}
\end{figure*}

\begin{figure*}[t]
    \centering
    \subfigure[1c3s5z]{
        \includegraphics[width=4.1cm]{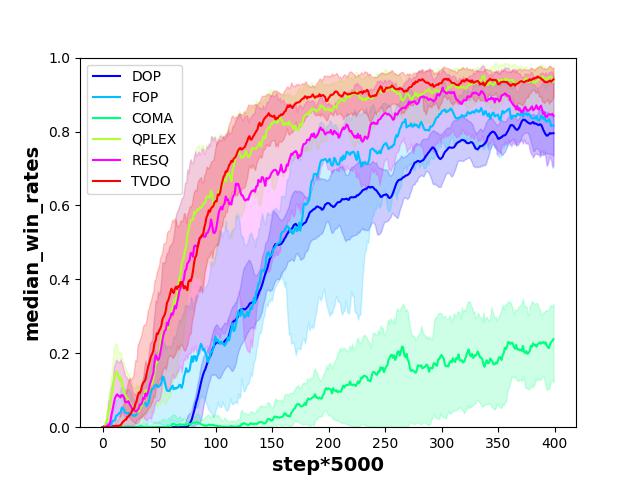}
    }
    \subfigure[2c\_vs\_64zg]{
        \includegraphics[width=4.1cm]{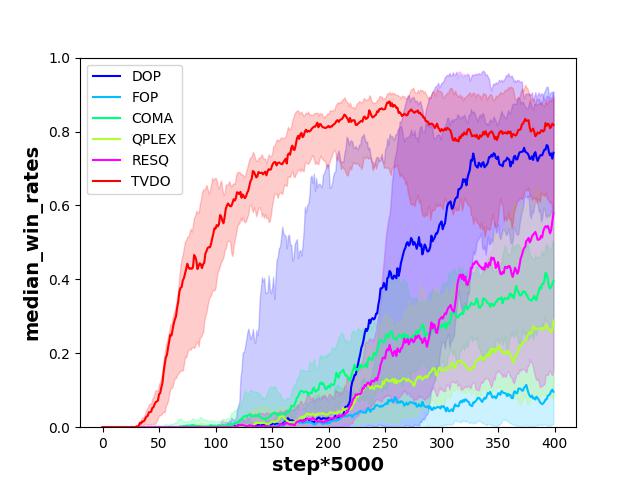}
    }
    \subfigure[2s\_vs\_1sc]{
        \includegraphics[width=4.1cm]{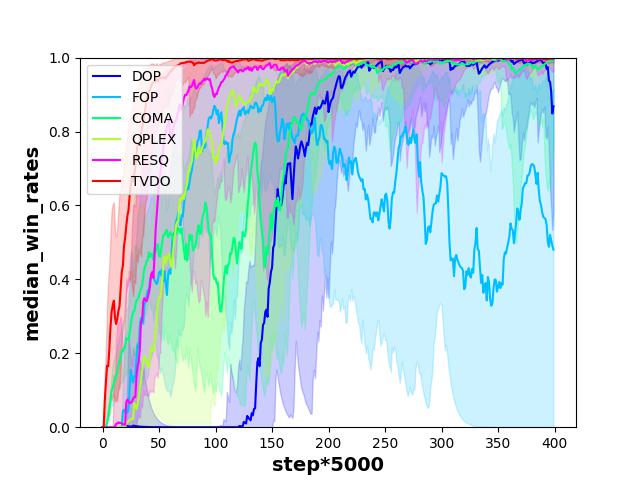}
    }
    \subfigure[2s3z]{
        \includegraphics[width=4.1cm]{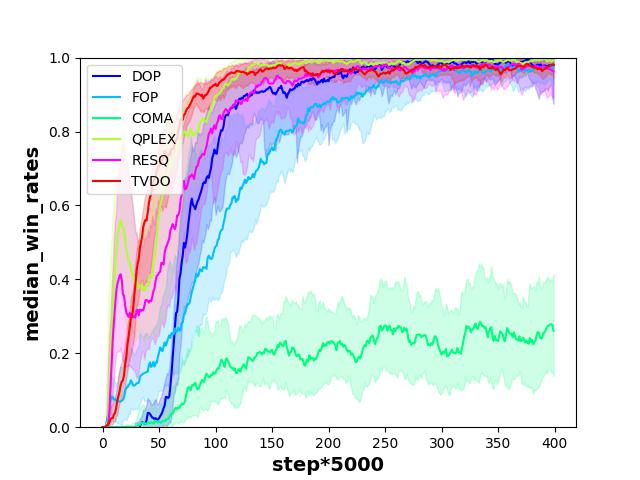}
    }
    \subfigure[3s\_vs\_3z]{
        \includegraphics[width=4.1cm]{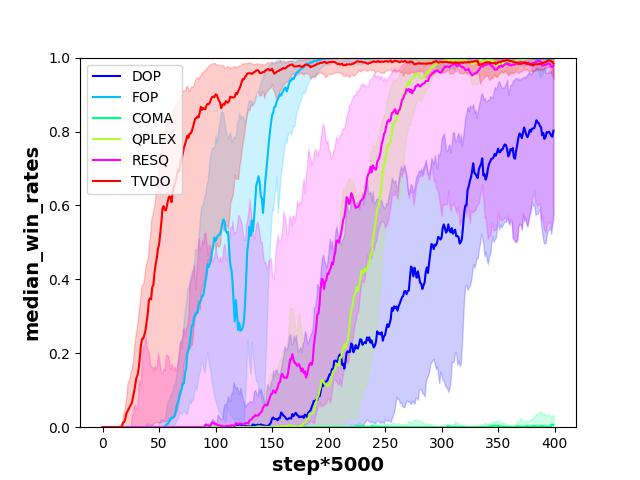}
    }
    \subfigure[3s\_vs\_5z]{
        \includegraphics[width=4.1cm]{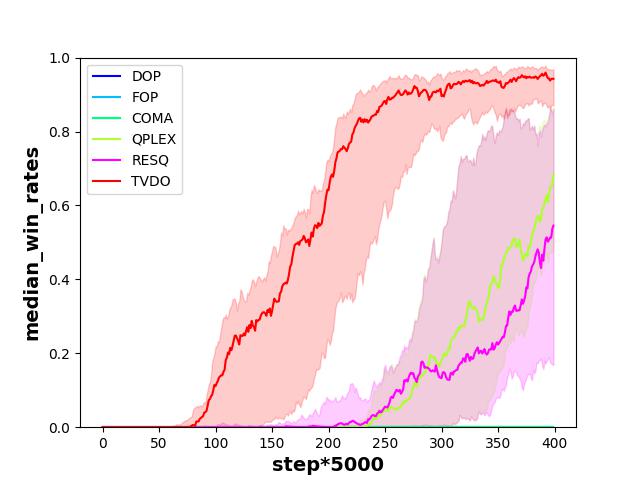}
    }
     \subfigure[3s5z]{
        \includegraphics[width=4.1cm]{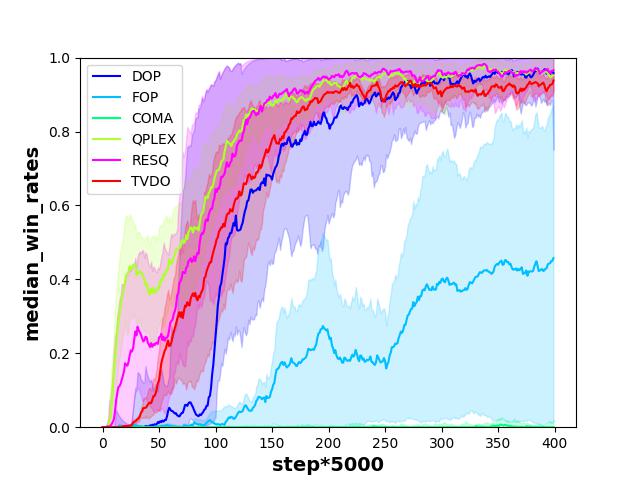}
    }
    \subfigure[5m\_vs\_6m]{
        \includegraphics[width=4.1cm]{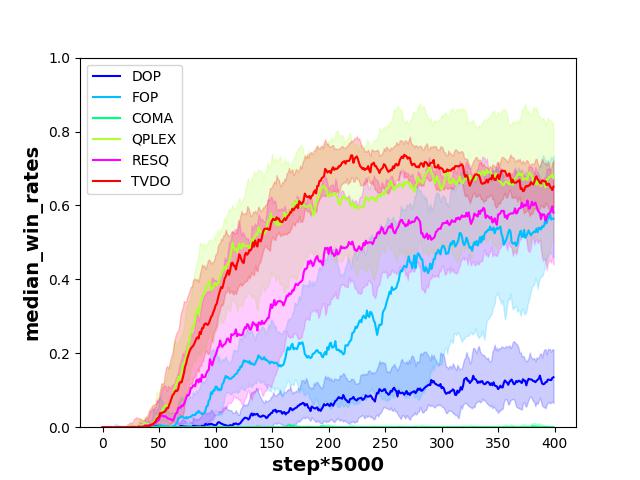}
    }
    \subfigure[6h\_vs\_8z]{
        \includegraphics[width=4.1cm]{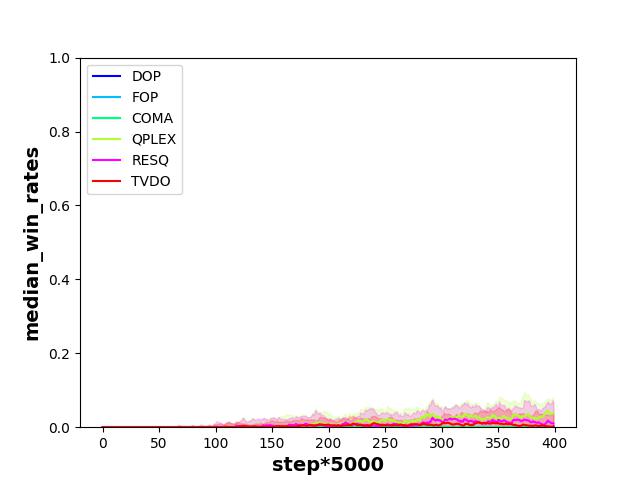}
    }
    \subfigure[8m]{
        \includegraphics[width=4.1cm]{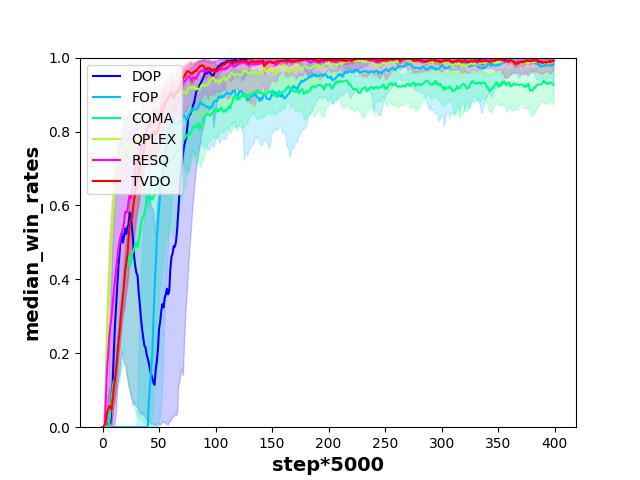}
    }
    \subfigure[10m\_vs\_11m]{
        \includegraphics[width=4.1cm]{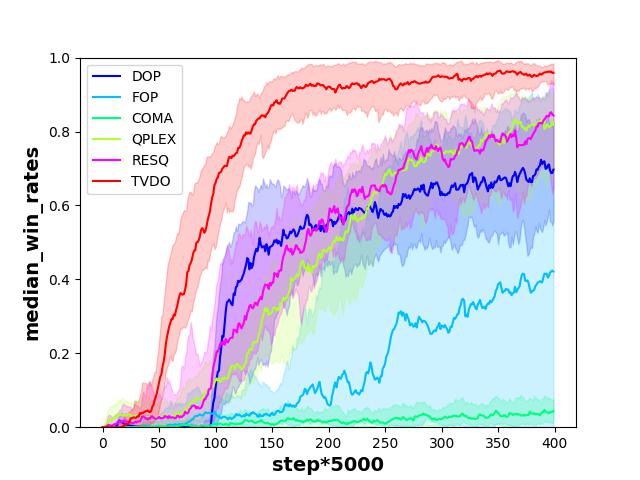}
    }
    \subfigure[25m]{
        \includegraphics[width=4.1cm]{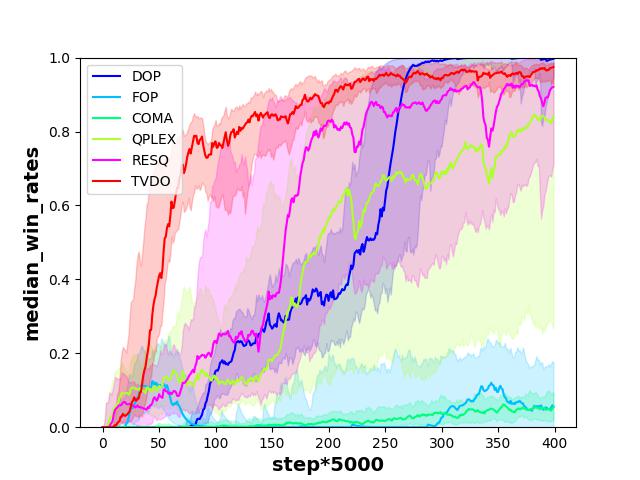}
    }
    \subfigure[27m\_vs\_30m]{
        \includegraphics[width=4.1cm]{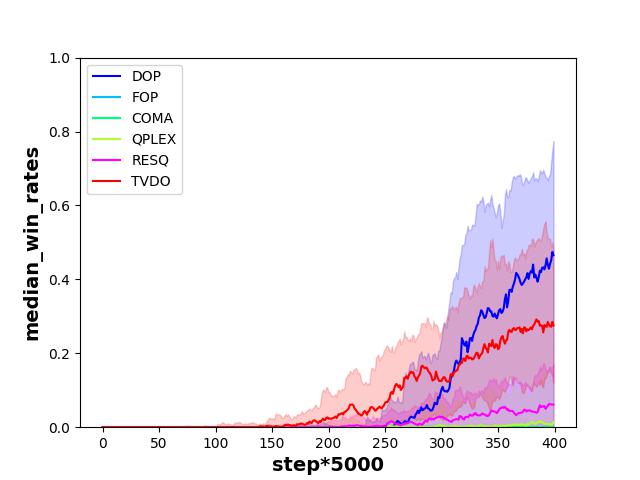}
    }
    \subfigure[bane\_vs\_bane]{
        \includegraphics[width=4.1cm]{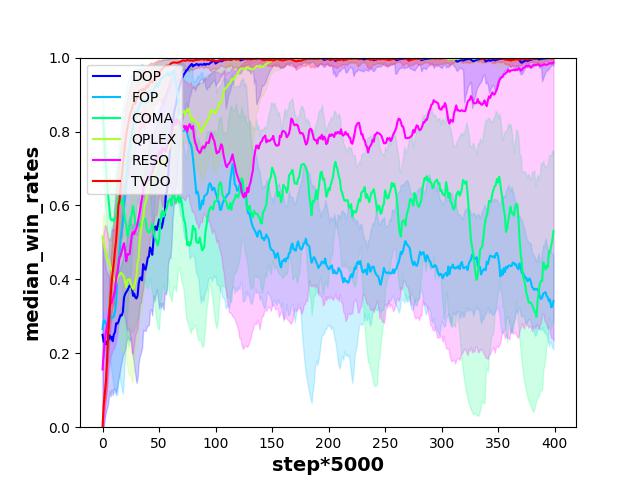}
    }
    \subfigure[MMM2]{
        \includegraphics[width=4.1cm]{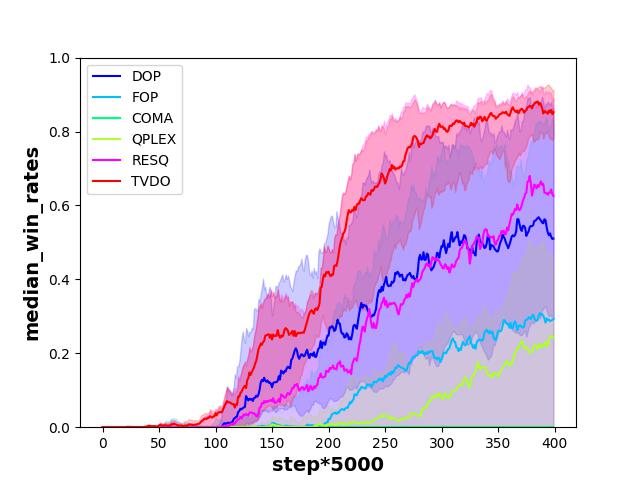}
    }
    \subfigure[so\_many\_baneling]{
        \includegraphics[width=4.1cm]{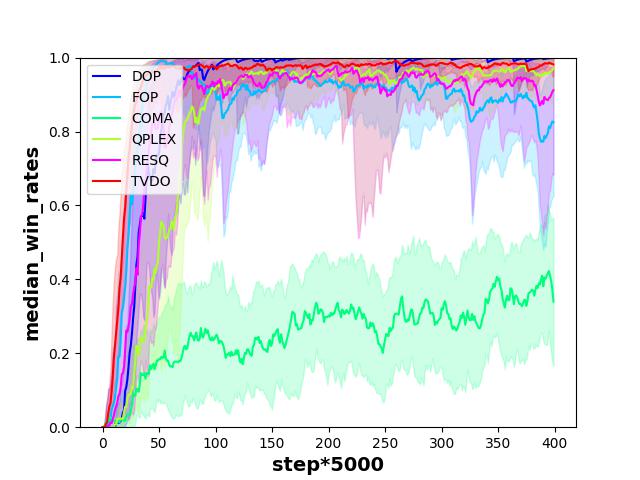}
    }
    \caption{The median win rates for TVDO and competing methods including COMA, DOP, FOP, QPLEX, and RESQ on environment of various difficulty in StarCraft \Rmnum{2} with variance.}
    \label{fig:dfcq_range}
\end{figure*}

\begin{table}[htbp]
    \vspace{-0.3cm}
    \centering
    \renewcommand{\arraystretch}{1.2}
    \caption{Hyperparameters for StarCraft \Rmnum{2} training.}
    \begin{tabular}{c c l}
        \hline
        \textbf{name} & \textbf{Value} & \textbf{Description}\\
        \hline
        optim & RMSprop & the optimizer of Torch \\
        difficulty & 7 & the difficulty of the game  \\
        n\_steps & $2 \times 10^6$ & Maximum steps until the end of training \\
        buffer\_size & 5000 & capacity of replay buffer \\
        batch\_size & 32 & number of samples from each update \\
        evaluate\_cycle & 5000 & how often to evaluate the model \\
        $lr$ & $5 \times 10^{-4}$ & learning rate \\
        $C$ & 200 & how often target networks update \\
        $\gamma$ & 0.99 & discount factor \\
        $\rho$ & 0.3 & the initial weight factor \\
        $\alpha$ & 0.999 & momentum coefficient \\
        \hline
    \end{tabular}
    \label{table3}
\end{table}

\subsection{Architecture and Training}
\label{sec:network}
The architecture of individual action-value function networks, which is shared across all baselines, comprises a Gated Recurrent Unit (GRU) combined with a fully connected layer before and after. The global action-value network $Q_\glb$ is composed of two hidden layers, each consisting of 64 units and ReLU non-linearities. Throughout the training, we use the ${\epsilon}$-greedy annealed with an initial value of 0.5 and a rate of 0.02 to select actions. Furthermore, $\gamma$ is set to 0.99. The replay buffer contains a collection of 5000 episodes. During the training process, batches with 32 episodes are sampled uniformly from the replay buffer. The training is conducted on fully unrolled episodes. After each episode, a single gradient descent step is performed to update the parameters of the networks. The setting of all hyperparameters for the training is presented in Table~\ref{table3}.

\subsection{Overall Results}
We compared the proposed TVDO method with ten SOTA MARL algorithms:
IQL~\cite{2017multiagent}, VDN~\cite{sunehag2017value}, COMA~\cite{foerster2018counterfactual}, QMIX~\cite{rashid2018qmix}, QTRAN~\cite{son2019qtran}, DOP~\cite{wang2020off}, QPLEX~\cite{wang2021qplex},  FOP~\cite{zhang2021fop}, RESQ~\cite{shen2022resq}, and VGN~\cite{Wei2024VGN}~\footnote{All comparison results are produced from the official codes: \\ 
FOP-\href{https://github.com/liyheng/FOP}{https://github.com/liyheng/FOP};
DOP-\href{https://github.com/TonghanWang/DOP}{https://github.com/TonghanWang/DOP}; 
QPLEX-\href{https://github.com/wjh720/QPLEX}{https://github.com/wjh720/QPLEX};
RESQ-\href{https://github.com/xmu-rl-3dv/ResQ}{https://github.com/xmu-rl-3dv/ResQ};
[IQL COMA VDN QMIX QTRAN]-\href{https://github.com/starry-sky6688/MARL-Algorithms}{https://github.com/starry-sky6688/MARL-Algorithms}.}. 
To evaluate the performance of each approach, we use the following evaluation procedure: after 5000 training steps, the training process is paused, and then an evaluation phase, where 32 episodes are executed in a greedy decentralized way, is initiated.
Furthermore, the win rates is defined as the proportion of these episodes where the agents eliminate all enemies within the limited steps.

To illustrate the overall performance of each method, Figure~\ref{fig:result_median_all_range} plots the averaged median win rates across all 16 scenarios, as well as the number of scenarios in which the algorithm outperforms. Meanwhile, we depict the performance of a basic heuristic method ignoring partial observability, in which each agent chooses the nearest enemy unit and engages in a coordinated attack with the entire team until the enemy agent is eliminated. Once the enemy agent is defeated, then the agent selects the next closest enemy unit to attack. Employing the fundamental form of focus-firing is a significant strategy to achieve favorable performance in various scenarios. However, the comparatively inferior performance of the heuristic method indicates that SMAC tasks require more sophisticated strategies beyond simple focus-firing. 

As shown in Figure~\ref{fig:result_median_all_range} (a), we can observe that the proposed TVDO method significantly and constantly outperforms all baselines and exhibits median win rates that are higher than 15\% on average across 16 scenarios. Moreover, VDN, QMIX, QTRAN, QPLEX, and TVDO almost outperform COMA and FOP, illustrating the sample efficiency of value-based MARL approaches compared to policy-based MARL methods expect DOP. This is because DOP combines off-policy tree backup updates with the on-policy TD($\lambda$) technique to solve the issue of lower training efficiency. Additionally, Figure~\ref{fig:result_median_all_range} (b) illustrates that emerges as the top performer across a maximum of twelve scenarios. We can find that the number of maps that TVDO performs best gradually decreases to 10 after 1.5M steps. This is because all baselines achieve almost 100\% win rates in some \textit{Easy} scenarios, such as 2s\_vs\_1sc and 8m. 

Furthermore, QTRAN performs well in the climb and penalty game but poorly in most scenarios of SMAC benchmark compared with our method and some MARL baselines as shown in Figure~\ref{fig:result_median_all_range}. It suggests that there may be some challenges when using QTRAN to solve some more complicated tasks due to the extra limitation, i.e., affine transformation. In contrast, our proposed TVDO method without any extra constraints achieves significant improvement in the performance of convergence speed and stability.

\subsection{Comparison Results}
All comparison results are shown in Figure \ref{fig:ivqq_range} and \ref{fig:dfcq_range}. 
From these experimental results, we have several aspects of observations. Firstly, TVDO is noticeably the strongest performer in all of the scenarios, in particular on the maps with heterogeneous agents. The largest performance gap can be seen on the 10m\_vs\_11m, 2c\_vs\_64zg, and MMM2. Because these asymmetric scenarios require learning a policy that has precise control to consistently defeat the enemy, which indicates that the superior representational capacity of TVDO presents a clear benefit over other value-decomposition methods. 

Secondly, TVDO and most MARL approaches including VDN, QMIX, DOP, QPLEX, RESQ, and VGN achieve reasonable performance on easy maps, which shows the advantage of learning the factorized action-value functions. However, almost all baselines perform not well on \textit{Hard} and \textit{Super-Hard} scenarios. In specific, the result on the 2c\_vs\_64zg scenario, which contains 2 Colossi allied units and 64 Zerglings enemy units, presents that only TVDO and DOP can easily find the winning strategy, as shown in Figure~\ref{fig:ivqq_range} (b) and Figure~\ref{fig:dfcq_range} (b). Furthermore, in the \textit{Super-Hard} task MMM2, TVDO achieves the best performance (nearly 90\% win rates) among all methods. These scenarios have a common feature that the quantity of enemy units is larger than the number of allied units, which requires the method to combine these combat units to form powerful tactics and strategies, and then achieve victory in the battle. 

Last, in the scenario of 5m\_vs\_6m and 3s\_vs\_5z, TVDO reaches good performance, while other baselines perform quite differently. In particular, 5m\_vs\_6m, consisting of 5 allied marines and 6 enemy marines, is an asymmetric task that requires precise control such as keeping a distance and evading attacks from the enemy units to win consistently. As observed in Figure~\ref{fig:ivqq_range} (h) and Figure~\ref{fig:dfcq_range} (h), TVDO and QPLEX significantly outperform other baselines with higher sample efficiency. However, QPLEX performs poorly on some maps such as 3s\_vs\_5z, 2c\_vs\_64zg, and MMM2, which should suffer from the extra constraint during the process of learning factorized action-value function decomposition. 

\subsection{Learned Policies}
To better understand the differences between the learned policy by each method, we examine the learned behavior of each agent from the battle replay~\footnote{Demonstrative videos are available at \href{https://sites.google.com/view/tvdo}{https://sites.google.com/view/tvdo}.}.
On the symmetric scenario with stalker and zealot units, i.e., 2s3z and 3s5z, these approaches with not good performance, such as VDN, COMA, and FOP, learn a specific strategy that agents initially move left and then engage enemies once they are in the shooting range, without considering other factors such as enemy position or unit weaknesses. However, TVDO learns a positioning strategy that allied stalkers are protected from enemy zealots, which is achieved by coordinating the behaviors of allied stalkers and zealots. Concretely, the allied zealots are instructed to block off the enemy zealots, preventing them from directly attacking the stalkers, and then the allied stalkers can fire at the enemy from a safe distance. It indicates that our method takes into account more factors such as units' capabilities and uses them strategically to maximize the chance of success.

On the homogeneous scenarios with marines units, such as 8m, 25m, and 10m\_vs\_11m, VDN, QMIX, and QTRAN learn a basic coordinated policy known as focus-firing by having the allied agents focus on a single enemy unit to eliminate it quickly. Although this simple strategy performs well on the 8m, it becomes difficult to achieve victory for the scenario with numerous agents, i.e., 25m, or asymmetric map such as 10\_vs\_11m. In contrast, TVDO can consistently learn an intricate strategy that positions allied agents into a semicircle to attack enemy units from the sides, which leads to higher cumulative rewards. Meanwhile, allied marines adopt their unique skill that using stimpack injectors to self-administer stimulants to increase the attack speed and movement speed for this strategy. It demonstrates that TVDO can learn sophisticated tactics and policy including using special skills of agents to defeat the enemy.

On the bane\_vs\_bane scenario with a substantial amount of enemy and allied units, i.e., 20 zerglings and 4 banelings, VDN and FOP learn initially an elementary policy of directly firing the visible enemy units, and then engage in more exploratory behaviors, such as attempting to move instead of attacking, which results in a significant decline in performance due to agents may make a suboptimal decision. The banelings possess superior combat capabilities but require strategic protection, which makes them used to break through defensive lines or deal devastating blows to an opponent's army. TVDO and some baselines including QPLEX and DOP learn how to combine movement and firing together and build lines of defense against the enemy, which indicates that the method needs to leverage agents' advantages and adopt flexible tactics to continuously improve control ability, then win in battle.

\subsection{Limitation}
To show the limitations of the proposed method, we conduct experiments on the 6h\_vs\_8z scenario, which is categorized as \textit{Super-Hard}. As shown in Figure~\ref{fig:ivqq_range}-\ref{fig:dfcq_range} (i), TVDO as well as all baselines fail to solve the task. Particularly, in the 6h\_vs\_8z scenario including 6 Hydralisks and 8 Zealots, one winning strategy is requiring all Hyralisks to ambush enemy units in their path and then attack together when the Zealots approach. Since all approaches employ noise-based exploration, the agents of the team face challenges in identifying states that are worth exploring and struggle to effectively coordinate their exploration efforts towards those states. Without improved exploration techniques, we found it to be extremely challenging for any methods to pick up this strategy~\cite{liu2021cooperative}. It demonstrates that efficient exploration for MARL is still a challenging and open problem.

\section{Conclusion and Future Work}
In this paper, we proposed a novel MARL method called factorized Tchebycheff value-decomposition optimization (TVDO) to keep the consistency of jointly-trained policies and individually-executed actions. Particularly, a nonlinear Tchebycheff aggregation function was formulated to realize the global optimum by tightly constraining the upper bound of individual action-value bias, which is inspired by the Tchebycheff method of multi-objective optimization. Then, our theoretical analysis demonstrated that TVDO could precisely express the value decomposition with a guarantee of consistency between global and individual policies. Empirically, in the climb and penalty game, we verified that TVDO could represent precisely the global-to-individual value factorization with a guarantee of policy consistency, and it also achieved a significant performance superiority over some SOTA MARL baselines in the SMAC benchmark. 

In the future, we will combine some cooperative exploration techniques with our method to improve the performance in more complex and real scenarios. Furthermore, one promising future direction is to integrate some techniques, including distributed optimization or offline simulation training.


\appendices
\section{The Derivation of the Bounds for Weight Factor}
\label{app:derivation}
\subsection{The derivation of the upper bound for Weight Factor}
In the proof of sufficiency of Theorem~\ref{the:TVDO}, we note that $Q_\glb(\tau, \overline{u}) \geq Q_{\glb}(\tau, u)$ need to be proved. According to the Eq.~(\ref{con:TVDO1}), we can show that
\begin{equation}
    Q_\glb(\tau,\overline{u}) = \sum_{i=1}^{N}{Q_i(\tau_i,u_i)} + \sum_{i=1}^{N}{\{ \left| Q_i(\tau_i,\overline{u}_i) - Q_i(\tau_i,u_i)\right| \}}.
\end{equation}
According to the Eq.~(\ref{con:TVDO1}), we can show that
\begin{equation}
    \begin{split}
        Q_{\glb}&(\tau, u) \leq \sum_{i=1}^{N}{Q_i(\tau_i,u_i)} + \rho E(\tau, u) \\
        &= \sum_{i=1}^{N}{Q_i(\tau_i,u_i)} + \rho \max\limits_{1\leq i \leq N}{\{ \left| Q_i(\tau_i,u_i) - Q_i(\tau_i,\overline{u}_i) \right| \}}.
    \end{split}
\end{equation}
In order to ensure that the $Q_\glb(\tau, \overline{u}) \geq Q_{\glb}(\tau, u)$ holds, we only need to guarantee that $\sum_{i=1}^N{\{ \left| Q_i(\tau_i,u_i) - Q_i(\tau_i,\overline{u}_i) \right| \}}$ is greater than or equal to $\rho \cdot \max\limits_{1\leq i \leq N}{\{ \left| Q_i(\tau_i,u_i) - Q_i(\tau_i,\overline{u}_i) \right| \}}$. In other word, $\rho$ only need to be less than or equal to $\frac{\sum_{i=1}^N{\{ \left| Q_i(\tau_i,u_i) - Q_i(\tau_i,\overline{u}_i) \right| \}}}{\max\limits_{1\leq i \leq N}{\{ \left| Q_i(\tau_i,u_i) - Q_i(\tau_i,\overline{u}_i) \right| \}}}$. 

\subsection{The derivation of the lower bound for Weight Factor} 
In the proof of necessity of Theorem~\ref{the:TVDO}, we note that $\Gamma \geq 0$ need to be proved. According to the definition of IGM condition, the equation $Q_{\glb}(\tau, \overline{u}) = \max\limits_{u}{Q_\glb(\tau, u)} \geq Q_\glb(\tau,u)$ holds. Then we can show that 
\begin{equation}
    \Gamma \geq \sum_{i=1}^{N}{Q_i(\tau_i,u_i)} - Q_\glb(\tau, \overline{u}) + \rho E(\tau,u).
\end{equation}
To guarantee the equation $\Gamma \geq 0$ holds, we only need to satisfy $\rho E(\tau, u) \geq Q_{\glb}(\tau, \overline{u}) - \sum_{i=1}^{N}{Q_i(\tau_i, u_i)}$. Then we can show that 
\begin{equation}
    \begin{split}
        \rho E(\tau, u) &= \rho \max\limits_{1\leq i \leq N}{ \{ \left| Q_i(\tau_i,u_i) - Q_i(\tau_i,\overline{u}_i) \right| \} } \\
        &\geq Q_{\glb}(\tau, \overline{u}) - \sum_{i=1}^{N}{Q_i(\tau_i, u_i)}.
    \end{split}
\end{equation}
Therefore, the weight factor $\rho$ should be greater than or equal to $\frac{Q_{\glb}(\tau, \overline{u}) - \sum_{i=1}^{N}{Q_i(\tau_i, u_i)}}{\max\limits_{1\leq i \leq N}{\{ \left| Q_i(\tau_i,u_i) - Q_i(\tau_i,\overline{u}_i) \right| \}}}$.

\section{The Proof of the Bounds for Weight Factor}
\label{app:proof}
According to the Theorem~\ref{the:TVDO}, the weight factor $\rho$ should be limited in a certain range for conforming the IGM condition, formally,
\vspace{-0.3cm}
\begin{equation}
    \left\{
        \begin{array}{c c}
             \vspace{0.2cm}
             \frac{Q_{\glb}(\tau, \overline{u}) - \sum_{i=1}^{N}{Q_i(\tau_i, u_i)}}{\max\limits_{1\leq i \leq N}{\{ \left| Q_i(\tau_i,u_i) - Q_i(\tau_i,\overline{u}_i) \right| \}}} \leq \rho. & \\
             \frac{\sum_{i=1}^N{\{ \left| Q_i(\tau_i,u_i) - Q_i(\tau_i,\overline{u}_i) \right| \}}}{\max\limits_{1\leq i \leq N}{\{ \left| Q_i(\tau_i,u_i) - Q_i(\tau_i,\overline{u}_i) \right| \}}} \geq \rho.
        \end{array}
    \right.
\end{equation}

Let $a$ and $b$ denote the lower and upper bound of $\rho$, respectively. When $u_i\rightarrow\overline{u}_i$, there should be $\sum_{i=1}^{N}{Q_i(\tau_i,u_i)} \geq Q_\glb(\tau,u)$ for the accumulated value factorization. Therefore, we can show that
\begin{equation}
    \begin{split}
        b &= \frac{\sum_{i=1}^N{\{ \left| Q_i(\tau_i,u_i) - Q_i(\tau_i,\overline{u}_i) \right| \}}}{\max\limits_{1\leq i \leq N}{\{ \left| Q_i(\tau_i,u_i) - Q_i(\tau_i,\overline{u}_i) \right| \}}} \\
        &= \frac{\sum_{i=1}^N{ Q_i(\tau_i,\overline{u}_i) } - \sum_{i=1}^N{Q_i(\tau_i,u_i)}}{\max\limits_{1\leq i \leq N}{\{ \left| Q_i(\tau_i,u_i) - Q_i(\tau_i,\overline{u}_i) \right| \}}} \\
        &\geq \frac{ Q_{\glb}(\tau, \overline{u}) - \sum_{i=1}^N{Q_i(\tau_i,u_i)}}{\max\limits_{1\leq i \leq N}{\{ \left| Q_i(\tau_i,u_i) - Q_i(\tau_i,\overline{u}_i) \right| \}}} = a.
    \end{split}
\end{equation}
Therefore, the lower bound is always lower than or equal to the upper bound.

\bibliographystyle{unsrt}  
\bibliography{bare_jrnl_new_sample4}  

\begin{IEEEbiography}[{\includegraphics[width=1in,height=1.25in, clip,keepaspectratio]{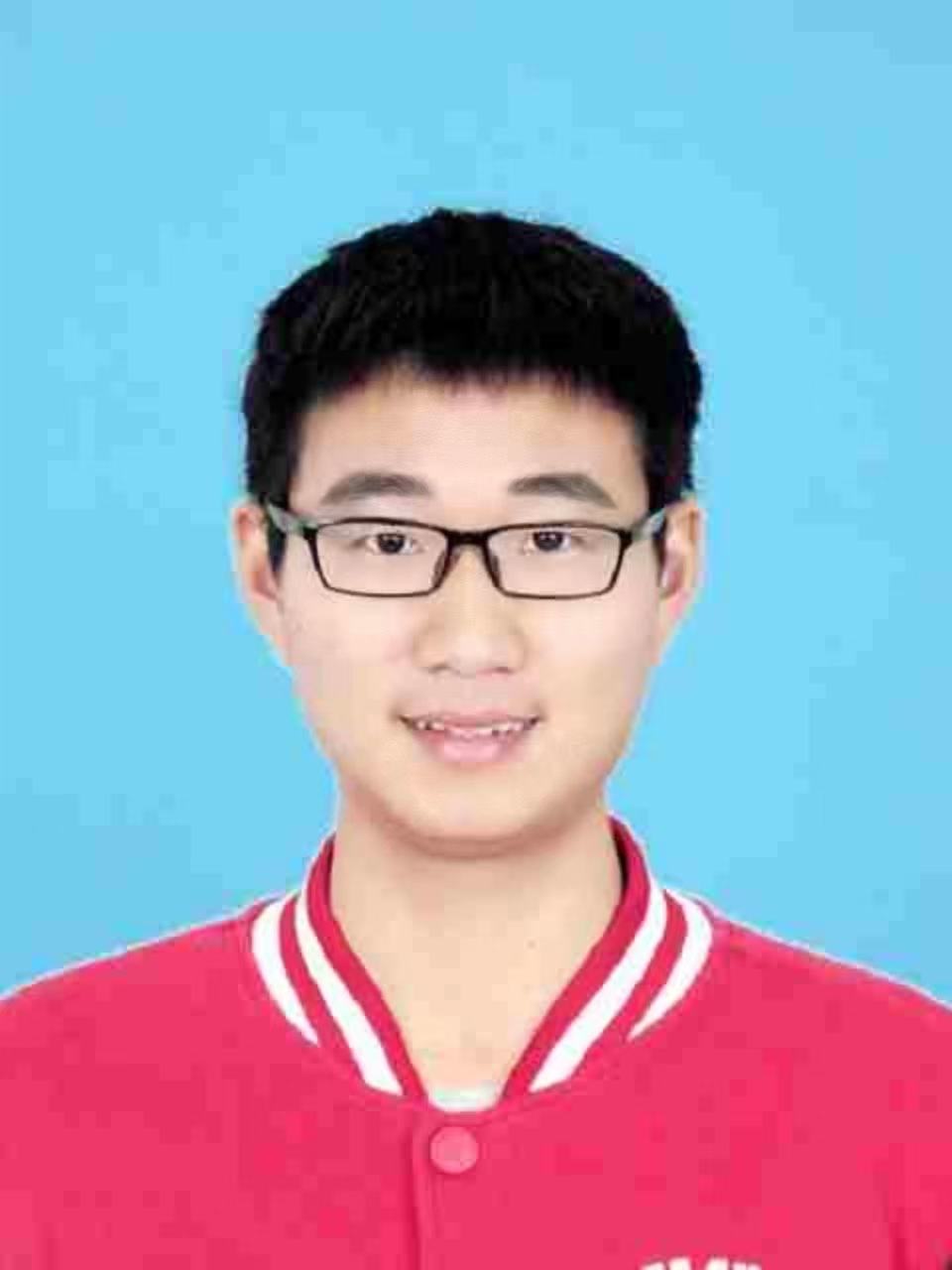}}]{Xiaoliang Hu} received the B.S. degree in Jimei University in 2018, and M.S degree in School of Computer Science and Engineering at Huaqiao University in 2021. He is currently working toward a Ph.D. degree at the PCALab, Key Lab of Intelligent Perception and Systems for High-Dimensional Information of the Ministry of Education, School of Computer Science and Engineering, Nanjing University of Science and Technology. 

His research interests include multi-agent reinforcement learning, multi-agent systems, optimal control, and their industrial applications.
\end{IEEEbiography}

\begin{IEEEbiography}[{\includegraphics[width=1in,height=1.25in,clip,keepaspectratio]{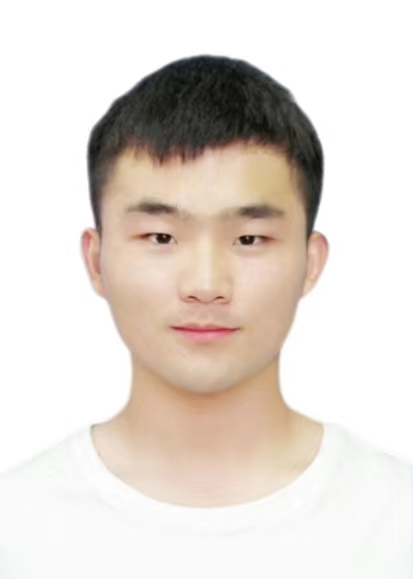}}]{Pengcheng Guo} received  B.S. degree in DaLian Ocean University in 2020, and M.S. degree at the PCALab, Key Lab of Intelligent Perception and Systems for High-Dimensional Information of Ministry of Education, School of Computer Science and Engineering, Nanjing University of Science and Technology in 2024. 

His research interests include multi-agent reinforcement learning and machine learning.
\end{IEEEbiography}

\begin{IEEEbiography}[{\includegraphics[width=1in,height=1.25in,clip,keepaspectratio]{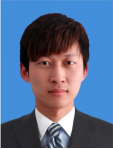}}]{Yadong Li} received the M.S. degree from China University of Mining and Technology in 2018 and currently working toward Ph.D. degree at PCALab, Key Lab of Intelligent Perception and System for High-Dimensional Information of Ministry of Education, School of Computer Science and Engineering, Nanjing University of Science and Technology. 

Since 2019, he has been a lecturer with the School of Information Science and Engineering, Zaozhuang University. 
His research interests include pattern recognition, machine learning, and deep reinforcement learning.
\end{IEEEbiography}

\begin{IEEEbiography}[{\includegraphics[width=1in,height=1.25in,clip,keepaspectratio]{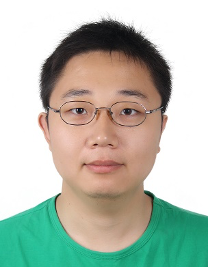}}]{Guangyu Li} (Member IEEE) received the B.S. degree from China University of Mining and Technology and M.S. degree from Tongji University, China, in 2008 and 2011, respectively, and the Ph.D. degree from University of Paris-Sud, Paris, France, in 2015. He is currently working as an associate professor with the Key Laboratory of Intelligent Perception and Systems for High-Dimensional Information of Ministry of Education, Nanjing University of Science and Technology, Nanjing, China. 
His current research interests include machine learning, reinforcement learning, computer vision, wireless networks, etc.
\end{IEEEbiography}

\begin{IEEEbiography}[{\includegraphics[width=1in,height=1.25in,clip,keepaspectratio]{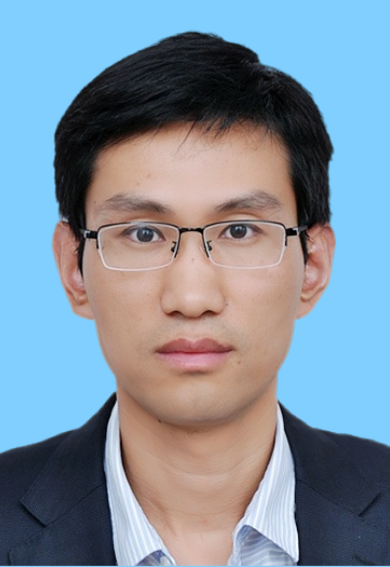}}]{Zhen Cui} (Member IEEE) received the B.S. degree from Shandong Normal University, Jinan, China, in 2004, the M.S. degree from Sun Yat-sen University, Guangzhou, China, in 2006, and the Ph.D. degree from Institute of Computing Technology (ICT), Chinese Academy of Sciences, Beijing, China, in 2014. 
He was a Research Fellow in the Department of Electrical and Computer Engineering at National University of Singapore (NUS) from Sep 2014 to Nov 2015. He also spent half a year as a Research Assistant on Nanyang Technological University (NTU) from Jun 2012 to Dec 2012. Currently, he is a Professor of Nanjing University of Science and Technology, China.
His research interests cover pattern recognition and machine learning, especially focusing on graph deep learning, deep reinforcement learning, multi-agent reinforcement learning, etc.
\end{IEEEbiography}

\begin{IEEEbiography}[{\includegraphics[width=1in,height=1.25in,clip,keepaspectratio]{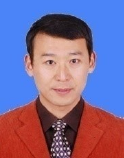}}]{Jian Yang} received the PhD degree from Nanjing University of Science and Technology (NJUST), on the subject of pattern recognition and intelligence systems in 2002. 

In 2003, he was a Postdoctoral researcher at the University of Zaragoza. From 2004 to 2006, he was a Postdoctoral Fellow at Biometrics Centre of Hong Kong Polytechnic University. From 2006 to 2007, he was a Postdoctoral Fellow at Department of Computer Science of New Jersey Institute of Technology. Now, he is a Chang-Jiang professor in the School of Computer Science and Technology of NUST. He is the author of more than 200 scientific papers in pattern recognition and computer vision. His papers have been cited more than 5000 times in the Web of Science, and 13000 times in the Scholar Google. His research interests include pattern recognition, computer vision, and machine learning. 

Currently, he is/was an associate editor of \textit{Pattern Recognition}, \textit{Pattern Recognition Letters}, \textit{IEEE Trans. Neural Networks and Learning Systems}, and \textit{Neurocomputing}. He is a Fellow of IAPR.
\end{IEEEbiography}

\vfill

\end{document}